\theoremstyle{thmstyleone}%
\newtheorem{theorem}{Theorem}
\theoremstyle{thmstyletwo}%
\newtheorem{example}{Example}%
\theoremstyle{thmstylethree}%
\newtheorem{definition}[theorem]{Definition}%
\newtheorem{assumption}[theorem]{Assumption}
\newtheorem{lemma}[theorem]{Lemma}
\newcommand\F{\mbox{I\kern-2pt F}}
\newcommand\cA{{\cal A}}
\newcommand\cE{{\cal E}}
\newcommand\cC{{\cal C}}
\newcommand\cF{{\cal F}}
\newcommand\cG{{\cal G}}
\newcommand\cH{{\cal H}}
\newcommand\cI{{\cal I}}
\newcommand\cL{{\cal L}}
\newcommand\cB{{\cal B}}
\newcommand\cN{{\cal N}}
\newcommand\cM{{\cal M}}
\newcommand\cX{{\cal X}}
\newcommand\cD{{\cal D}}
\newcommand\cP{{\cal P}}
\newcommand\cT{{\cal T}}
\newcommand\cV{{\cal V}}
\def\bbr{{\mathbb R}}
\def\bbn{{\mathbb N}}
\def\bbf{{\mathbb F}}
\def\bbg{{\mathbb G}}
\def\P{{\bf P}}
\newcommand\E{{\bf E}}
\newcommand\1{{\bf 1}}
\newcommand{\wh}{\widehat}
\newcommand\beq{\begin{equation}}
\newcommand\eeq{\end{equation}}
\newcommand\bea{\begin{eqnarray}}
\newcommand\eea{\end{eqnarray}}
\newcommand\bean{\begin{eqnarray*}}
\newcommand\eean{\end{eqnarray*}}
\begin{document}

\title{Well-posedness of behavioral singular stochastic control problems}


\author*[1,2,3]{\fnm{Artur} \sur{Sidorenko}}\email{sidorenkoap@my.msu.ru}

\affil*[1]{\orgname{Lomonosov Moscow State University}, \orgaddress{ \city{Moscow}, \country{Russia}}}

\affil[2]{\orgname{Vega Institute Foundation}, \orgaddress{ \city{Moscow}, \country{Russia}}}

\affil[3]{\orgname{Higher School of Economics}, \orgaddress{ \city{Moscow}, \country{Russia}}}





\abstract{We investigate the well-posedness of a general class of singular stochastic control problems in which controls are processes of finite variation. We develop an abstract framework, which we then apply to storage management and portfolio investment problems under proportional transaction costs. Within this setting, we establish the existence of an optimal strategy in the class of randomized controls for a range of goal functionals, including cumulative prospect theory (CPT) preferences. To the best of our knowledge, this is the first treatment of behavioral storage management with the CPT goal functional. For the portfolio management problem, our analysis exploits the metrizable Meyer–Zheng topology to simplify proofs. We thoroughly investigate the applicability of the Skorokhod representation theorem for adapted random processes. Overall, the presented framework provides a clear separation between constraints imposed on the market model and properties required of the goal functional.
}

\keywords{Singular stochastic control, Skorokhod representation theorem, Markets with transaction cost, Inventory management, Cumulative prospect theory, Meyer--Zheng topology}

\pacs[MSC Classification]{49J55, 60B10, 90B05, 91G10, 93E20}

\pacs[JEL Classification]{C65}

\maketitle

\section{Introduction}

Singular stochastic control problems arise in diverse contexts, including operations research, finance, and inventory theory. Early work was motivated by engineering applications  \citep{bather1967sequential} and storage management models \citep{harrison1983instantaneous,karatzas1983singular}, see also \citep{dixit1991simplified,sulem1986solvable}. A mixed singular and impulse control problem was considered in \citep{dai2013brownian,dai2013brownian2}. A comprehensive survey of singular control in inventory theory can be found in \citep{perera2023survey}. Most of the research models the demand in the storage dynamics as Brownian motion or a diffusion process. However, recent research involves regime-switching  \citep{shao2023optimal} and L\'evy processes \citep{yamazaki2017inventory,noba2025refraction} as well as control with unknown model parameters \citep{federico2023two}. 

Beyond inventory models, singular control has applications in mathematical finance, particularly in markets with transaction costs, where asset positions are adjusted via processes of finite variation. Three distinct paths have shaped the research landscape: Markovian approach \citep{magillPortfolioSelectionTransactions1976a,davisPortfolioSelectionTransaction1990a,shreveOptimalInvestmentConsumption1994a},  martingale approach \citep{cvitanicHEDGINGPORTFOLIOOPTIMIZATION1996,cvitanicOptimalTerminalWealth2001,guasoniOptimalInvestmentTransaction2002a,guasoniNOARBITRAGETRANSACTION2006a,czichowskyDualityTheoryPortfolio2016}, and weak convergence approach \citep{bankScalingLimitSuperreplication2017,bayraktarExtendedWeakConvergence2020,chauBehavioralInvestorsConic2020,bayraktarShortCommunicationNote2021}. 
Unlike problems for frictionless markets, price processes are not necessarily semimartingales \citep{guasoniOptimalInvestmentTransaction2002a,guasoniNOARBITRAGETRANSACTION2006a}.  

A convenient geometric formulation of singular control problems in finance was suggested in \citep{kabanovHedgingLiquidationTransaction1999,kabanovHedgingTransactionCosts2002}, see also \citep{kabanovMarketsTransactionCosts2010} for a comprehensive survey.  This setting was investigated in \citep{deelstraDualFormulationUtility2001,bouchardUtilityMaximizationReal2002} by duality arguments, whereas \citep{kabanovGeometricApproachPortfolio2004a,devalliereConsumptioninvestmentProblemTransaction2016} treated the HJB equations. Later, \citep{campiSuperreplicationTheoremKabanov2006,campiMultivariateUtilityMaximization2011} incorporated discontinuous price processes. Other works include \citep{kabanovHedgingLiquidationTransaction1999,kabanovHedgingTransactionCosts2002,schachermayerFundamentalTheoremAsset2004,grepatMultiassetVersionKusuoka2021}. 

Traditionally, these problems are formulated in terms of expected gain or utility maximization. However, decision-makers often depart from expected utility maximization as illustrated by the Allais paradox. To address this issue, cumulative prospect theory (CPT) was developed \citep{tverskyAdvancesProspectTheory1992,tversky1974judgment}.
Optimal control under CPT preferences has been studied primarily in frictionless markets \citep{jinBEHAVIORALPORTFOLIOSELECTION2008}, see also \citep{hePORTFOLIOCHOICEQUANTILES2011,xuNOTEQUANTILEFORMULATION2016,heHOPEFEARASPIRATIONS2016,ruschendorfConstructionOptimalPayoffs2020,vanbilsenDynamicConsumptionPortfolio2020,daiNonconcaveUtilityMaximization2022,biOptimalInvestmentProblem2023}. Well-posedness of such problems was examined \citep{rasonyiOptimalPortfolioChoice2013}.

In this study, we develop a general optimization framework that naturally embeds a range of applicable singular control problems under various goal functionals, including expected utility maximization and CPT preferences. The crucial features of this framework are: controls are processes of finite variation, a countable number of constraints is imposed, and the target functional depends solely on the probability distribution in a path space (i.e. expected utility or CPT preferences). Within this framework, we establish sufficient conditions for the existence of an optimal solution. 

Using this framework, we establish existence of an optimal control in a class of randomized strategies for both inventory management and portfolio investment problems involving transaction costs and CPT preferences. The use of randomized strategies is motivated by the observation that such strategies can improve the satisfaction of agents with CPT preferences \citep{carassusOPTIMALINVESTMENTBEHAVIORAL2015}. The additional randomness was exploited in other settings as well, see, e.g., \citep{hendersonRandomizedStrategiesProspect2017,hu2023casino}.  This effect bears some resemblance to the difference between weak and strong solutions of SDE, see, for example, \citep{tsirelsonExampleStochasticDifferential1976}. A key strength of our framework is its ability to clearly separate assumptions on the performance functional from those on the underlying stochastic model.  

To the best of our knowledge, this paper is the first treatment of behavioral storage management. Incorporating the CPT preferences into the framework would enable tailoring the solutions to the risk preferences of real economic agents. Our framework enables going beyond semimartingales and verifying well-posedness of the inventory problem when the demand is an arbitrary process with continuous paths.

For the portfolio investment problem, only \citep{chauBehavioralInvestorsConic2020} treats continuous-time markets with friction and CPT investors, whereas \citep{chauSkorohodRepresentationTheorem2017} considers non-linear frictions. These papers employ technically intricate weak convergence arguments for non-metrizable topologies. In singular control theory, the Skorokhod $J_1$ \citep{budhiraja2006existence} and weak $M_1$ \citep{cohen2021singular} topologies have been applied as well. 
We adopt a metrizable topology by P.A. Meyer and W.A. Zheng \citep{meyer1984tightness}, see also \citep{kurtzRandomTimeChanges1991}. Applications of this topology in portfolio optimizations can be found in \citep{bayraktarExtendedWeakConvergence2020}. The reliance on metrizable topologies simplifies technical proofs. 

The paper is organized as follows. Section  \ref{section:general} introduces the general setting. Section \ref{section:kabanov_setting} applies this setting for the multi-asset conic market and several examples of goal functionals including the CPT functional and discusses the embedding into the abstract setup. Finally, Section \ref{Section:storage} discusses the well-posedness of the behavioral storage management problem. For fluidity, proofs of the main results are relegated to Section \ref{Section:proof}, whereas auxiliary statements are collected in the Appendix.

\section{General setting}
\label{section:general}

Let $(\Omega, \cF, P)$ be a complete probability space, $Y$ be a $\bbr^m$-valued RCLL process with independent increments and $\xi$ be a uniform random variable on $[0, 1]$ independent of $Y$. Fix a strictly positive real number $T$. Let the filtration $\bbf = (\cF_t)_{t \in [0, T]}$ with $\cF_t = \sigma \{ \cN, \, Y_s, \, 0 \leq s \leq t \}$ be generated by $Y$ and the null sets $\cN$ from $\cF$. Standard arguments imply the right-continuity of $\bbf$. We define a filtration $\bbg = (\cG_t)_{t \in [0, T]}$ with $\cG_t = \sigma \{ \cF_t, \, \xi \}$. Again, $\bbg$ is right-continuous. 
We denote the space of continuous paths by $\cC_d := C([0, T], \bbr^d)$ and the Skorokhod space by $\cD_m := D([0, T], \bbr^m)$. 
\begin{definition}
    Any $\bbr^d$-valued $\bbg$-adapted c\`adl\`ag process with paths of finite variation is called a control or strategy.
\end{definition}
Denote the set of all controls by $\mathfrak{B}$.  

Let $\cV$ be a family of all right-continuous functions $f: \, [0, T] \to \bbr^d$ of finite variation ${\rm Var}_T \, f := \sum_i {\rm Var}_T \, f^i < \infty$, where ${\rm Var}_T \, f^i$ is the total variation of $f^i$ on $[0, T]$ with a convention $f^i(0-): =0$. To be more precise, 
$$
{\rm Var}_T \, f = \sup_{\tau \in \cT} \left[ |f(t_0)| + \sum_j |f(t_j) - f(t_{j-1})| \right],$$
where $\cT$ is the family of finite partitions $0 = t_0 < \dots < t_k = T$, $k \in \bbn$, of $[0, T]$.

We endow $\cV$ with the Meyer--Zheng topology \citep{meyer1984tightness} . Recall that this topology is initially defined on the the Skorokhod space $D_{MZ}(\bbr_+, \bbr^d)$. The subscript underpins the choice of the topology. We use a metric of \citep{bayraktarExtendedWeakConvergence2020} tailored to the space $\cD_{d, MZ} := D_{MZ}([0, T], \bbr^d)$ of functions on the finite segment $[0, T]$:
\beq
d_{MZ}(f, g) = \int_{[0, T[} \min \left(|f(t) - g(t)|, \, 1 \right) \, dt + \min\left(|f(T) - g(T)|, \, 1 \right). 
\eeq
The above Meyer--Zheng metric is a metric of convergence in measure ${\rm Leb}([0, T]) + \delta_T$. Note that any function $f \in \cD^d_{T, MZ}$ can be embedded into the space $D_{MZ}(\bbr_+, \bbr^d)$ by a continuous mapping $\Xi: \cD_{d, MZ} \to D_{MZ}(\bbr_+, \bbr^d)$ with $\Xi(f)(t) = f(t \wedge T)$. The space $D_{MZ}(\bbr_+, \bbr^d)$ is separable, though not complete, see \citep{meyer1984tightness}. Thus $\cD_{d, MZ}$ is separable. Recall that compactness is not required for the forward part of the Prokhrorov theorem.  The Borel $\sigma$-algebras generated by both topologies coincide: $\cB(\cD_{d, MZ}) = \cB(\cD_d) = \cB( \pi_t: \, t \in [0, T])$, where $\pi_t(f) = f(t)$ (see Lemma 4.2 of \citep{buckdahnWeakSolutionsBackward2005} and Theorem VI.1.14 of \citep{jacodLimitTheoremsStochastic2003} respectively). In order to establish tightness of probability measures under this topology, we use the Helly theorem: for any $c > 0$, the set $\{f \in \cD_{d, MZ}: \, {\rm Var}_T f \leq c \}$ is compact (see, e.g., Theorem 12.7 of \citep{protterFirstCourseReal1991} or Lemma 8 of \citep{meyer1984tightness}). In particular, $\cV$ is a Borel subset of $\cD^d_{T, MZ}$. A more detailed treatment is given in \citep{kurtzRandomTimeChanges1991} and \citep{buckdahnWeakSolutionsBackward2005}. 

For a separable metric space $\cE$, we introduce a Borel mapping $\phi: \cD_m \to \cE$ and an upper-semi-continuous function $\Phi: \cD_m \times \cE \times \cV \to \bbr^\bbn$.  Finally, we fix a goal functional $J: \cP(\cD_m \times \cE \times \cV) \to \bbr \cup \{ \pm \infty \}$.
\begin{definition}
    The control $B$ is admissible if $\Phi(Y, \phi(Y), B) \in \bbr_+^\bbn$ with probability one.
\end{definition}
Let $\cA$ be the set of all admissible controls.

The optimization problem is defined in the following way:
\beq\
\label{eq:sup}
\sup_{B \in \cA} J\left(\cL_P (Y, \phi(Y), B)\right),
\eeq
where $\cL_P(\xi)$ is the law of $\xi$ under measure $P$. 

For brevity, we put $\mathfrak{M} := \{ \cL_P (Y, \phi(Y), B ): \: B \in \cA \}$.

To guarantee attainability of the optimization problem, we impose three crucial assumptions.

\begin{assumption}
\label{A:1}
The function $J$ is upper-semi-continuous in $\mathfrak{M}$, i.e. if $\mu_n \overset{w}{\to} \mu$ in $\mathfrak{M}$ then 
\begin{equation*}
\limsup_n J(\mu_n) \leq J(\mu).
\end{equation*}
\end{assumption}

\begin{assumption}
\label{A:2}
The function $\Phi = (\Phi_i)_{i \in \bbn}$ is upper-semi-continuous, i.e. for every entry $i$, and every sequence $x_n \to x$ in $\cD_m \times \cE \times \cV $, we have 
\begin{equation*}
\limsup_{n \to \infty} \Phi_i(x_n) \leq \Phi_i(x).
\end{equation*}
\end{assumption}

\begin{assumption}
\label{A:3}
The set of random variables $\{ {\rm Var}_T B \, : B \in \cA \}$ is tight.    
\end{assumption}

The chief result of this section is as follows. 

\begin{theorem}
\label{theo:general}
Let Assumptions \ref{A:2} and \ref{A:3} be satisfied. Then the set $\mathfrak{M}$ is (sequentially) compact in $\cP(\cD_m \times \cE \times \cV)$.
\end{theorem}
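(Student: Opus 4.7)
My plan is the classical Prokhorov--Skorokhod scheme: given an arbitrary sequence $\mu_n = \cL_P(Y, \phi(Y), B_n)$ with $B_n \in \cA$, first extract a weakly convergent subsequence $\mu_n \to \mu$, and then exhibit an admissible $\bbg$-adapted control $B$ with $\cL_P(Y, \phi(Y), B) = \mu$, thereby placing $\mu$ in $\mathfrak{M}$. For tightness on $\cD_m \times \cE \times \cV$, observe that the $\cD_m \times \cE$-marginal is identically $\cL_P(Y, \phi(Y))$, hence trivially tight; for the $\cV$-marginal, Assumption~\ref{A:3} supplies, for each $\varepsilon > 0$, a constant $c$ such that $P({\rm Var}_T B_n > c) < \varepsilon$ uniformly in $n$, and the Helly-type compactness criterion quoted in the excerpt then makes $\{f \in \cV : {\rm Var}_T f \leq c\}$ a compact subset of $(\cV, d_{MZ})$. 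Prokhorov's theorem yields a subsequence, not relabelled, with $\mu_n \to \mu$ weakly; the $\cD_m \times \cE$-marginal of $\mu$ is still $\cL_P(Y, \phi(Y))$. Using Assumption~\ref{A:2}, each $\{\Phi_i \geq 0\}$ is closed, whence $\cR := \{\Phi \in \bbr_+^\bbn\}$ is closed, and Portmanteau delivers $\mu(\cR) \geq \limsup_n \mu_n(\cR) = 1$, which will give the admissibility constraint.

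To realize $\mu$ on the original space, I would apply the Skorokhod representation theorem, available because $\cD_m$, $\cE$, and $\cV$ are separable metric (the last as a Borel subset of the Polish space $\cD_d$, since $\cB(\cD_{d,MZ}) = \cB(\cD_d)$): on an auxiliary probability space one obtains $(\tilde Y_n, \tilde Z_n, \tilde B_n) \sim \mu_n$ with $(\tilde Y_n, \tilde Z_n, \tilde B_n) \to (\tilde Y, \tilde Z, \tilde B) \sim \mu$ almost surely. Since the joint $\cD_m \times \cE$-marginal of $\mu$ is concentrated on the Borel graph of $\phi$, necessarily $\tilde Z = \phi(\tilde Y)$ a.s. As $\cV$ is standard Borel, $\mu$ disintegrates as $\cL_P(Y, \phi(Y))(dy, dz) \otimes K(y, z; db)$, and a standard kernel-representation argument furnishes a measurable $F: \cD_m \times \cE \times [0,1] \to \cV$ with $F(y, z, U) \sim K(y, z; \cdot)$ for $U \sim {\rm Unif}[0,1]$. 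Then $B := F(Y, \phi(Y), \xi)$ is $\cG_T$-measurable, satisfies $\cL_P(Y, \phi(Y), B) = \mu$, and is admissible by the Portmanteau step above.

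The main obstacle, and the reason behind the paper's emphasis on the Meyer--Zheng topology and on an adapted Skorokhod representation, is to arrange that $B_t$ is $\cG_t$-measurable for every $t$, not merely $\cG_T$-measurable. Since $d_{MZ}$ metrises convergence in measure ${\rm Leb}_{[0,T]} + \delta_T$, the a.s.\ Skorokhod convergence $\tilde B_n \to \tilde B$ yields, along a further subsequence, $\tilde B_n(t) \to \tilde B(t)$ almost surely for Lebesgue-a.e.\ $t$ and for $t = T$. Combined with the $J_1$ convergence $\tilde Y_n \to \tilde Y$, which transfers to the restrictions on $[0,t]$ at continuity points of $\tilde Y$, one obtains that $\tilde B(t)$ is measurable with respect to the time-$t$ $\sigma$-algebra generated by $\tilde Y$ and an auxiliary uniform variable at a full-Lebesgue set of times; the c\`adl\`ag property of $\tilde B$ together with the right-continuity of $\bbg$ extend this adaptedness to every $t$. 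The final step is to run the whole construction in a version of the Skorokhod representation that transfers this adapted structure back to $(\Omega, \cG, P)$ via $\xi$, producing the required $\bbg$-adapted $B \in \cA$ and hence $\mu \in \mathfrak{M}$.
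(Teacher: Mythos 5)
Your tightness argument (fixed $\cD_m\times\cE$-marginal, Assumption \ref{A:3} plus the Helly compactness criterion for the $\cV$-marginal, then Prokhorov), the Skorokhod representation step, the kernel/disintegration construction $B:=F(Y,\phi(Y),\xi)$ realizing the limit law on the original space, and the admissibility argument all match the paper's proof in substance; indeed your Portmanteau argument for admissibility (each $\{\Phi_i\geq 0\}$ closed by Assumption \ref{A:2}, hence $\mu(\cR)\geq\limsup_n\mu_n(\cR)=1$) is a legitimate and slightly slicker variant of the paper's route, which instead applies the upper semi-continuity of $\Phi$ along the a.s.\ convergent Skorokhod representatives.

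The genuine gap is in the adaptedness step, which is the heart of the theorem. You argue that, since $\tilde B_n(t)\to\tilde B(t)$ a.s.\ at a.e.\ $t$ and $\tilde Y_n\to\tilde Y$, the limit $\tilde B(t)$ is ``measurable with respect to the time-$t$ $\sigma$-algebra generated by $\tilde Y$ and an auxiliary uniform variable.'' This does not follow: measurability is not preserved under a.s.\ convergence across \emph{different} random elements. Each $\tilde B_n(t)$ is (at best) tied to $\tilde Y_n$ and to an auxiliary variable that is not even part of the represented triple $(\tilde Y_n,\tilde Z_n,\tilde B_n)$ --- the Skorokhod representation preserves only joint laws, so $\tilde B_n(t)$ need not be a function of $\tilde Y_n$ at all, and nothing forces the limit $\tilde B(t)$ to be a function of $(\tilde Y^t,\text{aux})$; the classical counterexample ($X,Y$ i.i.d.\ signs, $XY$ independent of each yet not $\sigma(Y)$-measurable) shows that even independence plus joint measurability cannot be upgraded to measurability without extra structure. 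Your closing sentence, invoking ``a version of the Skorokhod representation that transfers this adapted structure back,'' simply postulates the adapted Skorokhod representation that is precisely the nontrivial content to be proved. The paper closes this gap \emph{distributionally}: since $B^n$ is $\bbg$-adapted, $Y$ has independent increments and $\xi\perp Y$, each $B^n_t$ is independent of the post-$t$ increments $(Y_{u_1}-Y_t,\dots,Y_{u_k}-Y_t)$; this product structure of finite-dimensional laws passes to the weak limit (using convergence of $\tilde B^n_t$ at times in a dense set, cf.\ Lemma \ref{lemm:appx_mz}, and right-continuity), so the resettled $B^\dagger_t=h_t(Y,\xi)$ is $\sigma\{\,_t\!Y,Y^t,\xi\}$-measurable and independent of $_t\!Y$, and then Lemma 29 of Chau--R\'asonyi is invoked to convert this independence into $\sigma\{Y^t,\xi\}$-measurability, i.e.\ $\bbg$-adaptedness. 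Without an argument of this type (or a proof of the adapted representation you appeal to), your proof does not establish $\mu\in\mathfrak{M}$.
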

The proof is inspired by \citep{chauSkorohodRepresentationTheorem2017} and is deferred to Section \ref{Section:proof}. The following theorem is the main result of this section. 
\begin{theorem} \label{theo:main}
Let $\cA \neq \varnothing$ and Assumptions \ref{A:1}--\ref{A:3} hold. Then the supremum of \eqref{eq:sup} is attainable, i.e. there exists a process $B^\dagger \in \cA$ such that 
\beq\
\label{eq:sup2}
\sup_{B \in \cA} J\left(\cL_P (Y, \phi(Y), B)\right) = J\left(\cL_P (Y, \phi(Y), B^\dagger)\right).
\eeq
\end{theorem}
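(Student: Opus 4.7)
The plan is to deduce the theorem from the preceding compactness result (Theorem \ref{theo:general}) combined with the upper semicontinuity of $J$ (Assumption \ref{A:1}), via the standard direct method of the calculus of variations.

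Set $s := \sup_{B \in \cA} J(\cL_P(Y, \phi(Y), B))$. If $s = -\infty$, then any element $B^\dagger \in \cA$ (which exists since $\cA \neq \varnothing$) already attains the supremum and there is nothing to prove. Otherwise, $s \in (-\infty, +\infty]$, and I would pick a maximizing sequence $B_n \in \cA$ with $J(\mu_n) \to s$, where $\mu_n := \cL_P(Y, \phi(Y), B_n) \in \mathfrak{M}$.

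Next, invoke Theorem \ref{theo:general}: since Assumptions \ref{A:2} and \ref{A:3} are in force, $\mathfrak{M}$ is sequentially compact in $\cP(\cD_m \times \cE \times \cV)$. Hence along some subsequence $\mu_{n_k}$ converges weakly to a limit $\mu^\dagger \in \mathfrak{M}$. By the very definition of $\mathfrak{M}$, there exists $B^\dagger \in \cA$ such that $\mu^\dagger = \cL_P(Y, \phi(Y), B^\dagger)$. Now apply Assumption \ref{A:1}: since $\mu_{n_k} \overset{w}{\to} \mu^\dagger$ with all terms and the limit lying in $\mathfrak{M}$, upper semicontinuity of $J$ on $\mathfrak{M}$ gives
\[
s = \lim_k J(\mu_{n_k}) = \limsup_k J(\mu_{n_k}) \leq J(\mu^\dagger) \leq s,
\]
so equality holds throughout and $B^\dagger$ realizes the supremum in \eqref{eq:sup2}.

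I do not expect any real obstacle inside the present theorem itself; the whole argument is simply compactness plus upper semicontinuity. The genuinely substantive work has already been absorbed into Theorem \ref{theo:general}, whose proof must produce, from a tight sequence of joint laws, a genuine admissible control $B^\dagger$ on the original basis (or an appropriate enlargement thereof) whose joint law with $(Y, \phi(Y))$ equals the weak limit $\mu^\dagger$. That is precisely where the Skorokhod representation for adapted processes and the metrizability of the Meyer–Zheng topology on $\cV$ are essential; but at the level of Theorem \ref{theo:main} these facts can be used as a black box.
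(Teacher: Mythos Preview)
Your proof is correct and follows essentially the same route as the paper's own argument: dispose of the trivial case $s=-\infty$, take a maximizing sequence, extract a weakly convergent subsequence in $\mathfrak{M}$ via Theorem~\ref{theo:general}, and conclude by the upper semicontinuity of $J$. You are also right that all the substantive work is hidden in Theorem~\ref{theo:general}, which is used here as a black box.
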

\begin{proof}
If $\sup_{\nu \in \mathfrak{M}} J(\nu) = -\infty$, there is nothing to prove. Otherwise, we select such a sequence $\mu_n \in \mathfrak{M}$ satisfying 
\begin{equation*}
J(\mu_n) > \sup_{\nu \in \mathfrak{M}} J(\nu) - \frac{1}{n}
\end{equation*}
for a finite $\sup_{\nu \in \mathfrak{M}} J(\nu)$ or
\begin{equation*}
J(\mu_n) > n
\end{equation*}
for $\sup_{\nu \in \mathfrak{M}} J(\nu) = + \infty$. Due to compactness of $\mathfrak{M}$, there exists a subsequence of measures $\mu_{n_k} \to \mu^\dagger$ with the limit $\mu^\dagger \in \mathfrak{M}$. Due to Assumption \ref{A:1} and the choice of the sequence,
\begin{equation*}
\sup_{\nu \in \mathfrak{M}} J(\nu) \leq \limsup_{k \to \infty} J(\mu_{n_k}) \leq J(\mu^\dagger) \leq \sup_{\nu \in \mathfrak{M}} J(\nu).
\end{equation*}
Thus $J(\mu^\dagger) = \sup_{\nu \in \mathfrak{M}} J(\nu)$, and this concludes the proof.
\end{proof}

A detailed proof of Theorem \ref{theo:general} is relegated to Section \ref{section:proof_general}.


 \section{Portfolio investment under transaction costs}
 \label{section:kabanov_setting}

In this section, we describe the Kabanov model of markets with transaction costs and establish a link between this model and the general setting of Section \ref{section:general}.

We adopt the assumptions on the stochastic basis from Section \ref{section:general}.

Let $d \in \bbn$ be a number of assets and $S = (S_t)_{t \in [0, T]}$ be a $\bbr^d$-valued process associated with the prices of these assets. The price process $S$ is continuous, $\bbf$-adapted and all its entries are strictly positive. The first asset is a risk-free asset with a constant price $S^1 \equiv 1$ and, henceforth, will be referred to as the numeraire. By a change of measurement units, we set $S_0 = {\bf 1}$. 

The core idea of the Kabanov model \citep{kabanovHedgingLiquidationTransaction1999,kabanovMarketsTransactionCosts2010} is that the transaction fees can be represented by a closed convex proper cone $K$ such that $\bbr^d_+ \subset K$. Recall that a closed convex cone is proper if $K \cap (-K) = \{ 0 \}$. The dual cone $K^*$ is defined as $K^* = \{ y \in \bbr^d : \, xy \geq 0 \: \forall x \in K  \}$. The equivalent condition for a proper cone is ${\rm int} \, K^* \neq \varnothing$. Since $\bbr^d_+ \subset K$, the dual cone $K^* \setminus \{ 0 \} \subset {\rm int} \, \bbr^d_+$. Thanks to the bipolar theorem, $K^{**} = K$. 

We will also consider a solvency cone in terms of physical units $\wh K_t := K / S_t$. To be precise, we put $\varphi_t(x) := x / S_t$ and $\wh K_t := \varphi_t (K)$. Note that $\wh K^*_t = \varphi^{-1}_t (K^*)$. 
The closed convex proper cone $K$ generates a partial order $x \preccurlyeq y \Leftrightarrow y - x \in K$. 
\begin{definition}
   We say that a function $f: [0, T] \to \bbr^d$ is $K$-decreasing if $s \leq t$ implies $f(s) - f(t) \in K$.  
\end{definition}

\begin{definition}
    Any $\bbg$-adapted $\bbr^d$-valued $K$-decreasing c\`adl\`ag process $B$ is called a control or strategy.
\end{definition}

By definition, $B_{0-} = 0$ and $B_0$ is the charge of the initial point. Due to Lemma \ref{lemm:6.1}, $B$ is a process of finite variation. If $B = B^+ - B^-$ is the component-wise Jordan  decomposition, $B^+_t$ is the total monetary value of assets bought up to time $t$, whereas $B^-_t$ is the monetary value of asset sold up to time $t$. 

Due to Proposition I.3.13 of \citep{jacodLimitTheoremsStochastic2003}, process $B$ has a Radon--Nykodym derivative $\dot B$ with respect to the total variation process ${\rm Var} \, B := \sum_j {\rm Var} \, B^j$, where $\dot B$ is an optional process itself. By virtue of Lemma 3.6.1 of \citep{kabanovMarketsTransactionCosts2010}, this derivative can be chosen in such a way that $\dot B \in -K$ up to evanescence. Thus, our definition of strategy coincides with the one of \citep{kabanovMarketsTransactionCosts2010}, Section 3.6.

For a starting point $x \in K$, the controlled process 
\begin{equation*}
\wh V = \wh V^{x, B} := x + (1/S) \cdot B,
\end{equation*}
where $1/S_t = (1/S^1_t, \dots, 1/S^d_t)$ and $\cdot$ denotes the component-wise stochastic integration. Since $B$ is of bounded variation, the stochastic integral is simply the pathwise Lebesgue--Stieltjes integral. Recall that if $f$ is a continuous function and $g$ is of bounded variation then the Lebesgue--Stieltjes (L-S) and the Riemann--Stieltjes (R-S) integrations 
$$
{\rm (L-S)} \int_{[0,T]} f dg = f(0) g(0) + {\rm (R-S)} \int_0^T f dg. 
$$
The reason for this discrepancy is that the signed measure $dg$ has an atom at zero  $g(0)$. 

The entries of $\wh V_t$ can be interpreted as amounts of the assets in physical units at time $t$. Amounts of the assets in terms of the numeraire are $V := S \odot \wh V$, where $\odot$ is the Hadamard (component-wise) product. In other words, $\wh V = V / S$. 

\begin{definition}
 A strategy $B$ is admissible if $V^{x, B}_t \in K$ a.s. for all $t$.   
\end{definition}

Due to the right-continuity of $V^{x, B}$, this requirement is equivalent to $V^{x, B} \in K$ up to evanescence, or $ \wh V^{x, B} \in \wh K$ up to evanescence. 
 The set of all admissible strategies is denoted as $\cA(x)$. In terms of $\wh K$, that means that $\wh V^{x, B} \in \wh K$ a.s. 
It is easily seen that
 $\cA(x) \neq \varnothing$ for $x \in K$. Indeed, define the liquidation function $\ell$ of \citep{bouchardUtilityMaximizationReal2002}:
\beq
\ell(x) = \sup \{ y \in \bbr: \; x - y e_1 \in K   \},
\eeq
where $e_1$ is the first vector of the standard basic in $\bbr^d$. Note that $\ell$ is continuous. Put
\begin{equation*}
L^x_t := \ell(x)e_1 - x, \quad t \in [0, T]. 
\end{equation*}
It follows that $L^x$ is $K$-decreasing and deterministic, and 
\begin{equation*}
\wh V^{x, L^x}_t = \ell(x)e_1, \quad t \in [0, T]. 
\end{equation*}
If $x \in  K$, the liquidation function $\ell(x) \geq 0$, thus $L^x \in \cA(x)$.

The next theorem establishes that the Kabanov model is incorporated into the setup described in Section \ref{section:general}. It is proven in Section \ref{Section:proof}.

\begin{theorem}
\label{theo:embedding}
For a fixed $x \in K$, there exist a Borel function $\phi$ and a continuous function $\Phi_x$ such that 
\begin{equation*}
\cA_x := \{ B \in \mathfrak{B} \colon \Phi_x(Y, \phi(Y), B) \in \bbr^{\bbn}_+ \} = \cA(x).
\end{equation*}
\end{theorem}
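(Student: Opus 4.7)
The plan is to set $\cE:=\cC_d$ with the uniform topology and $\phi(Y):=S$. The latter is a Borel function of $Y$ because $\bbf=\sigma(Y,\cN)$ and $S$ is $\bbf$-adapted with continuous paths, so by Doob--Dynkin there is a Borel map $\phi:\cD_m\to\cC_d$ with $S=\phi(Y)$. The remaining task is to construct $\Phi_x$ whose componentwise non-negativity encodes the two requirements defining $\cA(x)$, namely (a) that $B$ is $K$-decreasing and (b) that $V^{x,B}_t\in K$ for every $t\in[0,T]$.

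Fix a countable dense subset $\{y_i\}_{i\in\bbn}$ of $K^*\cap\{|y|=1\}$ and a bounded continuous function $\eta:\bbr\to[-1,0]$ with $\eta(r)=0$ iff $r\ge 0$, for example $\eta(r)=-(r)_-/(1+(r)_-)$. For each $i\in\bbn$ set
\begin{align*}
\Phi^{(a)}_i(y,s,b) &:= \iint_{\{0\le u\le v\le T\}}\eta\bigl(y_i\cdot(b(u)-b(v))\bigr)\,du\,dv,\\
\Phi^{(b)}_i(y,s,b) &:= \int_0^T\eta\bigl(y_i\cdot\widetilde V^{x,s,b}_t\bigr)\,dt,
\end{align*}
where $\widetilde V^{x,s,b}_t:=s_t\odot\bigl(x+\int_0^t(1/s_u)\,db(u)\bigr)$ is the deterministic Lebesgue--Stieltjes counterpart of $V^{x,B}$, well-defined once $\cE$ is restricted to continuous paths with strictly positive entries. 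Concatenate $(\Phi^{(a)}_i,\Phi^{(b)}_i)_{i\in\bbn}$ into $\Phi_x$ via a bijection $\bbn\sqcup\bbn\to\bbn$. Since every component is non-positive, $\Phi_x\in\bbr^\bbn_+$ holds iff each component vanishes; a Fubini argument combined with right-continuity of $b$ and $\widetilde V^{x,s,b}$ and density of $\{y_i\}$ in $K^*$ then shows that this is equivalent to $b$ being $K$-decreasing and $\widetilde V^{x,s,b}_t\in K$ for every $t$, giving $\cA_x=\cA(x)$.

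The main obstacle is to verify the continuity of each component on $\cD_m\times\cC_d\times\cV$ under the product of the Skorokhod, uniform, and Meyer--Zheng topologies. For $\Phi^{(a)}_i$ this is routine: MZ-convergence $b_n\to b$ yields convergence in Lebesgue measure on $[0,T]$, hence (by Fubini) convergence of $b_n(u)-b_n(v)$ in Lebesgue measure on $\{u\le v\}$, after which boundedness and continuity of $\eta$ invoke the dominated convergence theorem. The delicate step is $\Phi^{(b)}_i$, because the Lebesgue--Stieltjes map $b\mapsto\int_0^\cdot(1/s_u)\,db(u)$ is not jointly MZ-continuous on all of $\cV$. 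The plan to handle it is to approximate $1/s$ uniformly by $C^1$ integrands, apply integration by parts to rewrite $\int(1/s)\,db$ in terms of the endpoint value $b(T)$ (MZ-continuous by the construction of $d_{MZ}$) plus a Lebesgue integral of $b$ against a smooth weight, and invoke an auxiliary MZ-control lemma (in the spirit of \citep{meyer1984tightness,kurtzRandomTimeChanges1991}) to pass to the limit. The boundedness of $\eta$ then provides dominated convergence to close the continuity of $\Phi^{(b)}_i$ and complete the proof.
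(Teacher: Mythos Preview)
Your functional $\Phi^{(b)}_i$ is not continuous on $\cD_m\times\cC_d\times\cV$, and the integration-by-parts plan cannot be salvaged. The obstruction is that Meyer--Zheng convergence $b_n\to b$ gives no control on ${\rm Var}_T\,b_n$, so the map $b\mapsto\int_{[0,\cdot]}(1/s)\,db$ is not MZ-continuous. Concretely, take $d=1$, $x=0$, $T=1$, $s(t)=1+t$ and $b_n:=-n\,\mathbf{1}_{[1/n,2/n)}$. Then $d_{MZ}(b_n,0)\le 1/n\to 0$, while for every $t\ge 2/n$ one computes $\int_{[0,t]}(1+u)^{-1}\,db_n(u)=-\bigl[(1+1/n)(1+2/n)\bigr]^{-1}\to-1$; hence $\widetilde V^{0,s,b_n}_t\to-(1+t)$ Lebesgue-a.e.\ and $\Phi^{(b)}_1(y,s,b_n)\to\int_0^1\eta(-(1+t))\,dt<0=\Phi^{(b)}_1(y,s,0)$. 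Your $C^1$ approximation cannot absorb this, since the remainder $\|1/s-g\|_\infty\cdot{\rm Var}_T\,b_n$ blows up. Worse, upper-semicontinuity also fails: add the positive spike $c_n:=n\,\mathbf{1}_{[1/n,2/n)}$ to any fixed $b$ with $\widetilde V^{x,s,b}<0$ on a set of positive measure; then $b+c_n\to b$ in MZ but $\widetilde V^{x,s,b+c_n}_t\to\widetilde V^{x,s,b}_t+(1+t)$, so $\Phi^{(b)}_1(y,s,b+c_n)$ tends to a value strictly larger than $\Phi^{(b)}_1(y,s,b)$.

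There is also a smaller (fixable) gap in the set identity $\cA_x=\cA(x)$: your Lebesgue integrals do not see the terminal time. Right-continuity propagates the a.e.\ inequalities only to $t\in[0,T)$, so neither $b(u)-b(T)\in K$ nor $\widetilde V^{x,s,b}_T\in K$ is encoded (e.g.\ take any $K$-decreasing path on $[0,T)$ with an upward jump at $T$). The paper's encoding is structurally different: $K$-monotonicity is tested by $\Phi_{kl}(y,z,b)=a_k\int_{[0,T]}f_l\,db$ against a dense family $f_l\in C([0,T],\bbr_+)$, which reads positivity of the measure $a_k\,db$ on the whole of $[0,T]$ (atom at $T$ included); solvency is tested pointwise at a countable dense $\{t_j\}\ni T$ via $\Phi'_{jk}(y,z,b)=a_k\bigl((x+\int_{[0,t_j]}(1/z)\,db)\odot z(t_j)\bigr)$, with continuity argued through a Riemann--Stieltjes convergence theorem rather than through convergence in measure of the running value process.
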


In order for $\mathfrak{M}$ to be compact, we will impose a robust version of the no-arbitrage condition, cf. \citep{schachermayerFundamentalTheoremAsset2004,bayraktarExtendedWeakConvergence2020}. Loosely speaking, robustness means that arbitrage opportunities will not appear if the transaction fees undergo a slight decrease. We call a consistent price system (CPS) an $\bbr^d_+$-valued right-continuous $\bbf$-martingale $Z$ such that the quotient $ Z_t / S_t \in K^* \setminus \{ 0 \}$ a.s. for every $t \in [0, T]$. In other words, $Z \in \wh K^* \setminus \{ 0 \}$ up to evanescence. 
For a closed convex cone $G$, the $\varepsilon$-interior 
\beq
\varepsilon {\text -} {\rm int} \, G := \{ x \in \bbr^d : \; xy > \varepsilon |x| |y|  \; \; \forall y \in G^* \setminus \{ 0 \} \},
\eeq
where $|x|$ and $|y|$ are the standard $\ell^2$-norms of the vectors. A consistent price system $Z$ is an $\varepsilon$-consistent price system ($\varepsilon$-CPS) if $ Z_t / S_t \in \varepsilon {\text -} {\rm int}  K^* \setminus \{ 0 \}$ a.s. for every $t$. The set of all $\varepsilon$-consistent price systems is denoted as $\cM_\varepsilon (K)$.

\begin{assumption}
\label{A:4}
There exists $\varepsilon > 0$ such that $\cM_\varepsilon (K) \neq \varnothing$. 
\end{assumption}

The theorem below demonstrates that the no-arbitrage property is associated with compactness.
\begin{theorem}
\label{theo:apply}
Let Assumption \ref{A:4} hold and $x \in K$. Then Assumption \ref{A:3} is satisfied.
\end{theorem}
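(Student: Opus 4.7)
The plan is to use the $\varepsilon$-CPS $Z$ from Assumption \ref{A:4} to build a nonnegative supermartingale whose decreasing part controls ${\rm Var}_T B$, then convert the resulting moment bound into tightness. Since $\xi$ is independent of $\cF_\infty$, the process $Z$ is also a $\bbg$-martingale, so the stochastic integrals below make sense under $\bbg$.

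I start with the integration by parts formula applied to $\wh V := \wh V^{x,B}$, which is $\bbg$-adapted and of finite variation:
\begin{equation*}
M_t := Z_t \cdot \wh V_t = Z_0 \cdot x + \int_0^t (Z_s/S_s) \cdot dB_s + \int_0^t \wh V_{s-} \cdot dZ_s.
\end{equation*}
Admissibility ($V^{x,B}_t \in K$) together with $Z_t/S_t \in K^*$ gives $M_t \geq 0$, while $\dot B_s \in -K$ and $Z_s/S_s \in K^*$ make the first integral nonincreasing in $t$. Setting $A_t := -\int_0^t (Z_s/S_s) \cdot dB_s \geq 0$, the sum $M + A$ equals $M_0$ plus a local martingale and is nonnegative, hence a genuine supermartingale by Fatou, so $\E[A_T] \leq Z_0 \cdot x$.

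Next, the defining inequality of $\varepsilon{\text -}{\rm int}\, K^*$ applied to $-\dot B_s \in K$ gives $(Z_s/S_s) \cdot (-\dot B_s) \geq \varepsilon |Z_s/S_s| |\dot B_s|$. Writing $dB_s = \dot B_s\, d|B|_s$ with $|B|_s := {\rm Var}_s B$ and combining this with $|Z_s/S_s| \geq Z^1_s$ (the Euclidean norm dominates each component, and $S^1 \equiv 1$) and with $|\dot B_s| \geq 1/\sqrt d$ (Cauchy--Schwarz applied to $\sum_i |\dot B^i_s| = 1$), one obtains
\begin{equation*}
\E\!\left[\int_0^T Z^1_s\, d|B|_s\right] \leq \frac{\sqrt d}{\varepsilon}\,(Z_0 \cdot x) =: C_0.
\end{equation*}

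Finally I convert this $L^1$ bound into tightness of $|B|_T$. The $\varepsilon$-CPS condition forces $Z^1_t \geq \varepsilon|Z_t/S_t| > 0$ pathwise on $[0,T]$; coupled with the ``once zero, stays zero'' property of a nonnegative martingale (which excludes $Z^1_{t-} = 0 < Z^1_t$), this yields $\zeta := \inf_{t \in [0,T]} Z^1_t > 0$ a.s., so $\P(\zeta < \delta) \to 0$ as $\delta \downarrow 0$. On $\{\zeta \geq \delta\}$, $|B|_T \leq \delta^{-1}\int_0^T Z^1_s\, d|B|_s$, hence for any $\eta > 0$, choosing $\delta$ with $\P(\zeta < \delta) < \eta/2$ and then $C$ with $C_0/(\delta C) < \eta/2$ gives $\P(|B|_T > C) < \eta$ uniformly over $B \in \cA(x)$, which is exactly Assumption \ref{A:3}. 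I expect the main subtlety to be this last step: upgrading the $t$-pointwise positivity $Z^1_t > 0$ supplied by the $\varepsilon$-CPS definition into the uniform pathwise bound $\inf_{[0,T]} Z^1 > 0$ genuinely requires the ``once zero, stays zero'' property to rule out a pathological zero left limit.
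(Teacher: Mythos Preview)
Your argument is correct. The integration-by-parts identity, the nonnegativity of $Z\cdot\wh V$, the supermartingale bound $\E A_T\le Z_0\cdot x$, the $\varepsilon$-interior inequality combined with $|Z/S|\ge Z^1$ and $|\dot B|\ge 1/\sqrt d$ (from $\sum_i|\dot B^i|=1$), and the final tightness step via $\inf_{[0,T]}Z^1>0$ all check out. The filtration enlargement remark at the start is exactly the point one needs so that $\int \wh V_-\,dZ$ is a $\bbg$-local martingale; the absorption-at-zero property of nonnegative right-continuous supermartingales, together with $Z^1_T>0$ a.s., indeed upgrades pointwise positivity to $\inf_{[0,T]}Z^1>0$ a.s.

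Your route differs from the paper's. The paper passes to the equivalent measure $Q:=Z^{(1)}_T\,P$, works with the $Q$-martingale $M:=Z/Z^{(1)}$ (so that $M^{(1)}\equiv 1$ and hence $|M/S|\ge 1$), and obtains a clean first-moment bound $\E_Q[{\rm Var}_T B]\le \wp(x)/\varepsilon$; tightness under $Q$ follows from Markov's inequality and is transferred to $P$ via $P\ll Q$. You instead stay under $P$, obtain the \emph{weighted} bound $\E\bigl[\int_0^T Z^1_s\,d|B|_s\bigr]\le C_0$, and then remove the weight using the pathwise lower bound $\inf Z^1>0$. The paper's approach yields a sharper explicit moment estimate and packages the supermartingale calculus into a separate lemma (Lemma~\ref{lemma:4.1}), while your approach is more self-contained, avoids the change of measure, and makes transparent that the only structural ingredient needed beyond the supermartingale property is the strict positivity of a single coordinate of the CPS.
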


This theorem is proven in Section \ref{Section:proof}.

It remains to impose additional conditions that will guarantee Assumption \ref{A:1}. Since Assumption \ref{A:1} restricts goal functionals, the set of requirements depends on the goal. In the spirit of \citep{hePORTFOLIOCHOICEQUANTILES2011}, we provide several examples of goal functionals. 

Define $I: \cD_m \times \cC^d_{++} \times \cV \to \bbr^d$ as
\begin{equation*}
I(y, z, b) = z(T) \odot \int_{[0, T]} \frac{1}{z(s)} db(s).
\end{equation*}

\begin{lemma} \label{lemm:icont}
    The functional $I$ is continuous.
\end{lemma}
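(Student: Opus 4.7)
Since $I$ does not depend on $y$ and acts coordinatewise, it suffices to establish joint continuity in the scalar case $(z,b) \in \cC_{++} \times \cV$. Let $(z_n, b_n) \to (z,b)$. As $z$ is continuous and strictly positive on the compact $[0,T]$, uniform convergence yields $g_n := 1/z_n \to g := 1/z$ uniformly on $[0,T]$ and $z_n(T) \to z(T)$. The statement is to be understood on the MZ-compact sets $\{b \in \cV : \mathrm{Var}_T(b) \le c\}$ furnished by Helly's theorem; this is the setting supplied by Assumption~\ref{A:3} in applications. Accordingly, we may assume $\sup_n \mathrm{Var}_T(b_n) \le c$, which also yields the uniform $L^\infty$-bound $\|b_n\|_\infty \le c$ via $b_n(0-)=0$.

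The plan is to split
\begin{equation*}
\int \frac{db_n}{z_n} - \int \frac{db}{z} = \int (g_n - g)\, db_n + \int g\, d(b_n - b).
\end{equation*}
The first summand is bounded by $\|g_n - g\|_\infty \cdot c \to 0$. For the second, $g$ is continuous but need not be of bounded variation, so direct integration by parts is unavailable. Instead, fix $\varepsilon>0$ and use uniform continuity of $g$ to pick a piecewise-linear $g_\varepsilon$ with $\|g - g_\varepsilon\|_\infty < \varepsilon$. Then
\begin{equation*}
\int g\, d(b_n - b) = \int (g - g_\varepsilon)\, d(b_n - b) + \int g_\varepsilon\, d(b_n - b),
\end{equation*}
where the first piece is bounded by $2\varepsilon c$. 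For the second, integration by parts (valid since $g_\varepsilon$ is continuous and BV) gives
\begin{equation*}
\int g_\varepsilon\, d(b_n - b) = g_\varepsilon(T)(b_n(T) - b(T)) - \int_0^T (b_n - b)(s)\, g'_\varepsilon(s)\, ds.
\end{equation*}
The boundary term vanishes by MZ convergence at $T$; the Lebesgue integral vanishes because $g'_\varepsilon$ is bounded, $|b_n - b| \le 2c$ uniformly, and $b_n \to b$ in Lebesgue measure on $[0,T]$, whence the bounded convergence theorem gives $b_n \to b$ in $L^1$. An $\varepsilon/3$ argument (first fix $\varepsilon$ small, then let $n\to\infty$) concludes.

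The main obstacle is the weakness of the Meyer--Zheng topology: convergence in measure alone controls neither $\mathrm{Var}_T$ nor $\|\cdot\|_\infty$, and one can construct MZ-null sequences of finite-variation paths whose total variation blows up. The proof therefore leans essentially on the BV-bound from Helly's theorem (and, in applications, Assumption~\ref{A:3}), which does two things at once: it dominates the cross-term $\int(g_n - g)\, db_n$, and, via the implied $L^\infty$-bound on $b_n$, it upgrades convergence in measure to $L^1$-convergence against the bounded derivative $g'_\varepsilon$. This $L^1$-upgrade is the key device allowing us to sidestep the fact that $g = 1/z$ is merely continuous, not BV, through the piecewise-linear approximation.
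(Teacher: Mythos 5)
Your argument is sound as far as it goes, but it does not prove the lemma as stated: you insert the extra hypothesis $\sup_n \mathrm{Var}_T(b_n)\le c$, whereas Lemma \ref{lemm:icont} asserts continuity of $I$ on all of $\cD_m\times\cC^d_{++}\times\cV$, and Assumption \ref{A:3} cannot be invoked for this purpose --- it gives tightness of the variations as random variables, not a pathwise uniform bound, and it is not a hypothesis of the lemma anyway. That said, your restriction is not gratuitous: without a variation bound the claim as stated fails. Take $d=1$, $z(s)=1+s$ and $b_n=n^2\,{\bf 1}_{[t_0,t_0+1/n)}$ with $t_0+1/n<T$; then $d_{MZ}(b_n,0)=1/n\to 0$, while $\int_{[0,T]} z^{-1}\,db_n = n^2\bigl(z(t_0)^{-1}-z(t_0+1/n)^{-1}\bigr)\to\infty$, so $I$ is not Meyer--Zheng continuous at $(y,z,0)$. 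Note that the paper's own proof has the same blind spot: it extracts (via the Riesz theorem) a subsequence converging pointwise on a dense set containing $T$ and then cites Theorem 12.16 of Protter, a Helly--Bray type convergence theorem whose hypotheses include uniformly bounded variation, which Meyer--Zheng convergence alone does not supply. So the lemma really only holds on variation balls, which is exactly the form you prove; using it downstream to push forward weak convergence on $\mathfrak{M}$ then requires an additional localization in the variation that neither your appeal to Assumption \ref{A:3} nor the paper's one-line proof provides.

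On the restricted statement your route is correct and genuinely different from the paper's. The paper goes through convergence in measure, pointwise convergence along a subsequence on a dense set containing $T$, and the textbook Helly--Bray theorem for Riemann--Stieltjes integrals; you instead split $\int g_n\,db_n-\int g\,db$ into a cross term controlled by uniform convergence of $g_n=1/z_n$ against the variation bound, and a term $\int g\,d(b_n-b)$ handled by a piecewise-linear approximation of $g$, integration by parts, convergence at $T$ (which the MZ metric does deliver), and the upgrade from convergence in measure to $L^1$ via the uniform sup bound implied by $b_n(0-)=0$. This amounts to a self-contained proof of the Helly--Bray statement adapted to the MZ setting; it avoids the subsequence extraction, at the cost of being longer, while the paper's version is shorter but leans on the cited theorem and leaves the variation bound implicit. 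Two minor points: your bound $2\varepsilon c$ for $\int(g-g_\varepsilon)\,d(b_n-b)$ uses $\mathrm{Var}_T(b)\le c$, which is legitimate because the Helly ball is closed; and your integration-by-parts identity is valid because $g_\varepsilon$ is absolutely continuous with no atom at $0$, consistent with the convention $b(0-)=0$.
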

\begin{proof}
Fix a sequence $(y_n, z_n, b_n) \to (y, z, b)$. It suffices to verify that if from every subsequence $(y_{n_k}, z_{n_k}, b_{n_k})$ one can extract a further subsequence $(y_{n_{k_p}}, z_{n_{k_p}}, b_{n_{k_p}})$ such that $I((y_{n_{k_p}}, z_{n_{k_p}}, b_{n_{k_p}})) \to I (y, z, b)$. 

Fix an arbitrary subsequence $(y_{n_k}, z_{n_k}, b_{n_k})$. Convergence $b_{n_k} \to b$ in the Meyer--Zheng topology implies the convergence $b_{n_k} \to b$ in measure ${\rm Leb}([0, T)) + \delta_T$.
By the Riesz theorem, there is a subsequence $b_{n_{k_p}} \to b$ pointwise in a dense countable set $I \subset [0, T]$ with $T \in I$. By theorem 12.16 of \cite{protterFirstCourseReal1991}, we have $I((y_{n_{k_p}}, z_{n_{k_p}}, b_{n_{k_p}})) \to I (y, z, b)$. The lemma has been proven.  

\end{proof}

\begin{example}[Expected Utility Maximisation] \label{ex:eum}
\label{ex:utility}
\beq
\label{eq:utility}
    \sup_{B \in \cA(x)} \E U(V_T^{x, B}),
\eeq
where $x \in {\rm int} \, K$ and $U: \bbr^d \to \bbr \cup \{ \pm \infty \}$ is a utility function. We require that $U$ is upper-semi-continuous, but concavity is not imposed. 
\end{example}

\begin{assumption}
\label{A:5}
$U$ is upper-semi-continuous.
Additionally, for every $x \in K$, the family 
$$
\{  U^+( V^{x, B}_T, S_T): \, B \in \cA(x) \}
$$
is uniformly integrable.
\end{assumption}

\begin{theorem}
    Consider Example \ref{ex:eum}. Let Assumptions \ref{A:4} and \ref{A:5} be satisfied. Then the problem \eqref{eq:utility} attains a maximum.
\end{theorem}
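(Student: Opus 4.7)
The plan is to verify the four hypotheses of Theorem \ref{theo:main} for the data of Example \ref{ex:eum} and then read off the conclusion. Theorem \ref{theo:embedding} already supplies a Borel map $\phi$ and a \emph{continuous} constraint $\Phi_x$ with $\cA_x = \cA(x)$, so Assumption \ref{A:2} holds automatically. Non-emptiness of $\cA(x)$ is witnessed by the deterministic liquidation strategy $L^x$, since $\ell(x) \geq 0$ whenever $x \in K$. Assumption \ref{A:3} is precisely the conclusion of Theorem \ref{theo:apply}, whose only hypothesis, Assumption \ref{A:4}, is granted. The real work is thus reduced to verifying Assumption \ref{A:1} for
$$
J(\mu) \;=\; \int U\bigl(v_T(y,e,b)\bigr)\, d\mu(y,e,b),
$$
where $v_T$ is the terminal-wealth functional to be realized as a continuous map of the triple.

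First I would put $v_T$ in a form amenable to Lemma \ref{lemm:icont}. From the Kabanov dynamics,
$$
V_T^{x,B} \;=\; S_T \odot x + S_T \odot \int_{[0,T]} (1/S_s)\, dB_s \;=\; S_T \odot x + I(Y, S, B),
$$
with $I$ the integral operator appearing in Lemma \ref{lemm:icont}. In the embedding constructed for Theorem \ref{theo:embedding}, the price path $S$ is packaged as a continuous image of $\phi(Y)$; I would make this explicit so that, on the support of $\mathfrak{M}$, the mapping $(y,e,b)\mapsto V_T(y,e,b)$ is continuous by Lemma \ref{lemm:icont}. Composing with the upper-semi-continuous $U$ then produces an upper-semi-continuous integrand on $\cD_m \times \cE \times \cV$.

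Next I would pass upper-semi-continuity from integrand to integral. Fix a weakly convergent sequence $\mu_n \to \mu$ in $\mathfrak{M}$ and truncate: $U \wedge M$ is upper-semi-continuous and bounded above, so the Portmanteau theorem yields $\limsup_n \int (U\wedge M)\, d\mu_n \leq \int (U\wedge M)\, d\mu$. The uniform-integrability part of Assumption \ref{A:5} forces $\sup_n \int \bigl(U(v_T) - M\bigr)^+ d\mu_n \to 0$ as $M \to \infty$, and monotone convergence (on $U^+$, integrable under $\mu$ by the same estimate) gives $\int (U\wedge M)\, d\mu \to \int U\, d\mu$. Combining and sending $M\to\infty$ delivers $\limsup_n J(\mu_n) \leq J(\mu)$, which is Assumption \ref{A:1}. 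With all four ingredients verified, the existence of a maximizer follows from Theorem \ref{theo:main}.

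The hard part will be the first step: ensuring that the price path $S$ is genuinely recoverable as a continuous functional of $\phi(Y)$ on the support of $\mathfrak{M}$, so that Lemma \ref{lemm:icont} can be invoked with arguments of the form $(Y, \phi(Y), B)$. Once that bookkeeping is in place, the truncation argument is standard and the remaining pieces plug mechanically into the abstract existence theorem.
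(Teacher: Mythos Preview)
Your proposal is correct and follows essentially the same route as the paper: invoke Theorems \ref{theo:embedding} and \ref{theo:apply} for Assumptions \ref{A:2} and \ref{A:3}, then use Lemma \ref{lemm:icont} plus the uniform-integrability part of Assumption \ref{A:5} (the paper phrases this as ``the Fatou lemma'') to obtain Assumption \ref{A:1}, and conclude via Theorem \ref{theo:main}. Your worry in the last paragraph is unfounded: in the embedding of Theorem \ref{theo:embedding} the map $\phi$ is chosen so that $\phi(Y)=S$ \emph{is} the price path in $\cC^d_{++}$, so the second coordinate of the triple is $S$ itself and Lemma \ref{lemm:icont} applies without any further bookkeeping.
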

\begin{proof}
    By Lemma \ref{lemm:icont} the mapping $I_{*}: \mathfrak{M} \to \cP(\bbr^d)$ defined as $I_*(\mu)(A) := \mu(I^{-1}(A))$ is continuous. The functional $J$ from \eqref{eq:utility} is defined as follows:
    \beq
    \label{eq:utility_measure}
    J(\mu) = \int_{\bbr} U(x) I_*(\mu)(dx).
    \eeq
    Due to Assumption \ref{A:5} and the Fatou lemma, functional $J$ satisfies Assumption \ref{A:1}. Due to Theorems \ref{theo:main} \ref{theo:embedding} and \ref{theo:apply}, the assertion follows.
\end{proof}

Verification of the uniform integrability condition in Assumption \ref{A:5}  might be cumbersome. We suggest a sufficient condition inspired by \citep{bayraktarContinuityUtilityMaximization2020,bayraktarExtendedWeakConvergence2020}.

\begin{lemma}
\label{lemm:ui}
Suppose that $U$ is upper-semi-continuous and there exist $C > 0$, $\gamma \in ]0, 1[$ and $q > 1 / (1 - \gamma)$ satisfying the following:

$(i)$ for all $(x, s) \in K \times \bbr^d_+$,
\beq
U(x, s) \leq C(1 + \ell^\gamma (x));
\eeq

$(ii)$ there exists a consistent price system $Z$ such that
\beq
\E \left( Z_T^{(1)} \right)^{1-q} < \infty.
\eeq
Then Assumption \ref{A:5} holds true.
\end{lemma}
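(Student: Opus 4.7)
The plan is to reduce the uniform integrability of $\{U^+(V_T^{x,B}, S_T) : B \in \cA(x)\}$ to a uniform $L^{q'}$ bound on $\ell^\gamma(V_T^{x,B})$ for some $q' > 1$. By $(i)$ and the nonnegativity of $\ell$ on $K$, one has $U^+(V_T^{x,B}, S_T) \leq C(1 + \ell^\gamma(V_T^{x,B}))$, and every $L^{q'}$-bounded family with $q' > 1$ is uniformly integrable (de la Vallée Poussin). The argument then has three ingredients: a supermartingale estimate coming from the CPS $Z$ of $(ii)$, a pointwise duality bound for $\ell$, and a carefully balanced Hölder decomposition.

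For the supermartingale estimate, the standard Kabanov theory (integration by parts combined with $d\wh V^{x,B} = (1/S)\,dB$, $\dot B \in -K$ and $Z/S \in K^*\setminus\{0\}$) gives
\begin{equation*}
Z_t \cdot \wh V_t^{x,B} = Z_0 \cdot \wh V_0 + \int_{(0,t]} (Z_s/S_s)\cdot dB_s + \int_{(0,t]} \wh V^{x,B}_{s-} \cdot dZ_s,
\end{equation*}
in which the first Lebesgue--Stieltjes integral is nonincreasing (since $(Z/S)\cdot\dot B \leq 0$) and the second is a local martingale. Because $\wh V^{x,B} \in \wh K$ and $Z \in \wh K^*$ force $Z\cdot \wh V^{x,B} \geq 0$, Fatou upgrades this to a genuine supermartingale; together with $Z_0 \cdot B_0 \leq 0$ (as $B_0 \in -K$, $Z_0 \in K^*$), one obtains $\E[Z_T \cdot \wh V_T^{x,B}] \leq Z_0 \cdot x$ uniformly in $B$. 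The duality bound is immediate: $V_T^{x,B} - \ell(V_T^{x,B})\,e_1 \in K$ paired with $Z_T/S_T \in K^*$ gives (using $S^1 \equiv 1$)
\begin{equation*}
\ell(V_T^{x,B})\,Z_T^{(1)} \leq (Z_T/S_T)\cdot V_T^{x,B} = Z_T \cdot \wh V_T^{x,B},
\end{equation*}
hence $\E[\ell(V_T^{x,B})\,Z_T^{(1)}] \leq Z_0 \cdot x$.

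The Hölder step exploits the assumption $q > 1/(1-\gamma)$, which is equivalent to $(q-1)/(q\gamma) > 1$. Fix any $q' \in (1, (q-1)/(q\gamma)]$; then $\gamma q' < 1$ and, by direct algebra, $-\gamma q'/(1 - \gamma q') \geq 1 - q$. Writing $\ell^{\gamma q'}(V_T^{x,B}) = (\ell(V_T^{x,B})\,Z_T^{(1)})^{\gamma q'}\,(Z_T^{(1)})^{-\gamma q'}$ and applying Hölder with conjugate exponents $1/(\gamma q')$ and $1/(1 - \gamma q')$ yields
\begin{equation*}
\E[\ell^{\gamma q'}(V_T^{x,B})] \leq (Z_0 \cdot x)^{\gamma q'} \cdot \bigl(\E[(Z_T^{(1)})^{-\gamma q'/(1-\gamma q')}]\bigr)^{1 - \gamma q'}.
\end{equation*}
Splitting the last expectation on $\{Z_T^{(1)} \leq 1\}$ bounds it by $1 + \E[(Z_T^{(1)})^{1-q}]$, finite by $(ii)$. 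This gives the required uniform $L^{q'}$ bound on $\ell^\gamma(V_T^{x,B})$ and hence uniform integrability of $U^+(V_T^{x,B}, S_T)$. The main technical delicacy I foresee is the integration-by-parts argument underpinning the supermartingale property, specifically the correct handling of the initial atom of $B$ at $t=0$ and the localization needed when $\wh V^{x,B}$ is not bounded a priori; the exponent balancing in the Hölder step, while critical, is essentially forced by the arithmetic of $(i)$ and $(ii)$.
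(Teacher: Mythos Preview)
Your proof is correct and follows essentially the same route as the paper: the CPS supermartingale bound $\E[Z_T\cdot\wh V_T^{x,B}]\le Z_0\cdot x$, the duality inequality $\ell(V_T^{x,B})\,Z_T^{(1)}\le Z_T\cdot\wh V_T^{x,B}$, and a H\"older decomposition whose exponents are balanced precisely by the hypothesis $q>1/(1-\gamma)$. The paper's only packaging difference is an explicit change of measure to $Q=Z_T^{(1)}P$ before applying H\"older (under $Q$) with the fixed exponent $p=q/(q-1)$, which is algebraically equivalent to your choice $q'=(q-1)/(q\gamma)$.
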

The proof is given in Section \ref{Section:proof}.

\begin{example}[Goal Reaching]
\label{ex:goal_reaching}
\beq
\label{eq:goal_reaching}
\sup_{B \in \cA(x)} \E P(\ell(V_T^{x, B}) \geq b),
\eeq
where $b > 0$ is the goal intended to be reach by the terminal time. This formulation is motivated by a goal-reaching problem of \citep{browneReachingGoalsDeadline1999}.
This example is reduced to the previous case by putting $U(x) = {\bf 1}_{\ell(x) > b}$. Since this utility function is bounded and upper-semi-continuous, Assumption \ref{A:5} is fulfilled.
\end{example}

\begin{example}[Yaari's Dual Theory]
\label{ex:yaari}
\beq
\label{eq:yaari}
\sup_{B \in \cA(x)} \int_0^\infty w(P(\ell(V_T^{x, B}) > x)) dx,
\eeq
where $w: [0, 1] \to [0, 1]$ is a probability distortion, i.e. a continuous non-decreasing function with $w(0) = 0$ and $w(1) = 1$. The idea of distorted probabilty goes back to \citep{yaariDualTheoryChoice1987}. 

Contrary to Example \ref{ex:utility}, this functional requires the integration over the $\sigma$-finite Lebesgue measure. Because of this fact, we introduce

\begin{assumption}
\label{A:6}
There is a measurable function $g: \bbr_+ \to \bbr_+$ such that
\begin{equation*}
w \left(1 - F_{I_*(\mu)}(x) \right) \leq g(x) \quad {\rm a.e.}
\end{equation*}
for every $\mu \in \mathfrak{M}$ and
\begin{equation*}
\int_0^\infty g(x) dx < \infty.
\end{equation*}
\end{assumption}
\end{example}

\begin{theorem}
    Consider Example \ref{ex:yaari}. Let Assumptions \ref{A:4} and \ref{A:6} be satisfied. Then the problem \eqref{eq:yaari} attains a maximum.
\end{theorem}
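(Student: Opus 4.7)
The plan is to reduce the claim to Theorem \ref{theo:main}, which requires verifying $\cA(x) \neq \varnothing$ together with Assumptions \ref{A:1}--\ref{A:3}. Three of these are already furnished by earlier results: non-emptiness of $\cA(x)$ is witnessed by the liquidation strategy $L^x$ exhibited in Section \ref{section:kabanov_setting}; the representation $\cA(x) = \{B \in \mathfrak{B} : \Phi_x(Y, \phi(Y), B) \in \bbr^{\bbn}_+\}$ with continuous $\Phi_x$, which delivers Assumption \ref{A:2}, is provided by Theorem \ref{theo:embedding}; and tightness of $\{{\rm Var}_T B : B \in \cA(x)\}$, namely Assumption \ref{A:3}, follows from Theorem \ref{theo:apply} under Assumption \ref{A:4}. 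The core task is therefore to verify Assumption \ref{A:1} for the Yaari functional $J$.

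First I would combine Lemma \ref{lemm:icont} with continuity of $\ell$ to conclude that the push-forward map $\mathfrak{M} \to \cP(\bbr)$ sending $\mu = \cL_P(Y, \phi(Y), B)$ to the law of the terminal liquidation value $\ell(V^{x, B}_T)$ is continuous; I keep the paper's shorthand $I_*(\mu)$ for this push-forward. Suppose $\mu_n \to \mu$ weakly in $\mathfrak{M}$. Weak convergence $I_*(\mu_n) \to I_*(\mu)$ then follows from the continuous mapping theorem, so $F_{I_*(\mu_n)}(x) \to F_{I_*(\mu)}(x)$ at every continuity point of $F_{I_*(\mu)}$. Continuity of the distortion $w$ upgrades this to $w(1 - F_{I_*(\mu_n)}(x)) \to w(1 - F_{I_*(\mu)}(x))$ Lebesgue-a.e.\ on $[0, \infty)$, because the discontinuity set of any cumulative distribution function is at most countable.

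Next I would exchange limit and integral. Assumption \ref{A:6} supplies the Lebesgue-integrable majorant $g$, and dominated convergence yields $J(\mu_n) \to J(\mu)$. This is in fact continuity of $J$ on $\mathfrak{M}$, which is stronger than the upper-semi-continuity required by Assumption \ref{A:1}. An appeal to Theorem \ref{theo:main} then completes the argument.

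The one delicate point is the $\sigma$-finite integration range: weak convergence of measures on $\bbr$ only guarantees pointwise convergence of the distribution functions at continuity points, and without uniform control on the tails of $w(1 - F_{I_*(\mu_n)})$ the passage to the limit could fail due to mass escaping to infinity. Assumption \ref{A:6} is purpose-built to supply exactly the uniform dominant that closes this gap via dominated convergence; in its absence a considerably more involved argument controlling the tail behaviour along the approximating sequence would be required.
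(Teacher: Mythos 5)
Your proposal is correct and follows essentially the same route as the paper: reduce to Theorem \ref{theo:main} via Theorems \ref{theo:embedding} and \ref{theo:apply}, and verify Assumption \ref{A:1} for the Yaari functional using the integrable majorant supplied by Assumption \ref{A:6}. The only difference is cosmetic: the paper invokes the Fatou lemma (with the majorant $g$) to obtain upper semi-continuity, whereas you spell out the a.e.\ pointwise convergence of $w(1 - F_{I_*(\mu_n)})$ and apply dominated convergence, which yields the slightly stronger conclusion that $J$ is continuous on $\mathfrak{M}$.
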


\begin{proof}
The goal functional $J$ is given by
\beq
\label{eq:yaari_measure}
J(\mu) = \int_0^\infty w \left(1 - F_{I_*(\mu)}(x) \right) dx,
\eeq
where $F_\nu$ is a c.d.f. of a probability measure $\nu$ on $\bbr$.
Due to Assumption \ref{A:6} and the Fatou lemma, Assumption \ref{A:1} is fulfilled. Due to Theorems \ref{theo:main} \ref{theo:embedding} and \ref{theo:apply}, the assertion follows.
\end{proof}

\begin{example}[Cumulative Prospect Theory] \label{ex:cpt}
\label{ex:CPT}
Let $w_{\pm}: [0, 1] \to [0, 1]$ be continuous non-decreasing functions with $w(0) = 0$ and $w(1) = 1$. Henceforth, $w_{\pm}$ are called probability distortions. Let $U_{\pm}: \bbr^d \to \bbr_+$ with $U_\pm (0) = 0$ be continuous functions representing utilities. We require that $U_+$ is bounded. Let $W$ be an $\cF_T$-measurable $d$-dimensional r.v. For any r.v. $X \geq 0$, we define
\beq \label{eq:I_cpt}
I_{\pm}(X) := \int_0^\infty w_{\pm} \left(  P(X > x) \right) dx. 
\eeq
Note that $I_{-}(X)$ may be infinite.
For any $d$-dimensional r.v. $Z$, we put
\begin{equation*}
\cI_{\pm}(Z) := I_{\pm} ( U_{\pm} (Z - W)) 
\end{equation*}
and 
\begin{equation*}
\cI(Z) = \cI_+(Z) - \cI_-(Z).
\end{equation*}
We assume that  $U_+(x) = u_+(\ell(x)^+)$ and $U_-(x) = u_-(\ell(x)^-)$, where $u_\pm: \bbr \to \bbr$ are continuous increasing functions with $u(0) = 0$. Finally, the maximization problem itself is formulated as follows:
\beq
\label{eq:CPT_max}
\sup_{B \in \cA(x)} \cI(V^{x, B}_T).
\eeq

Following \citep{chauBehavioralInvestorsConic2020}, we assume that $U_+$ is bounded from above.
\end{example}

\begin{theorem}
    Consider Example \ref{ex:cpt}. Under Assumption \ref{A:4} and the conditions of the Example, the problem \eqref{eq:CPT_max} admits a solution.
\end{theorem}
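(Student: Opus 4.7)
The plan is to verify Assumptions \ref{A:1}--\ref{A:3} so that Theorem \ref{theo:main} applies to the CPT functional. Theorem \ref{theo:embedding} delivers a continuous $\Phi_x$, hence Assumption \ref{A:2} is automatic; Assumption \ref{A:4} combined with Theorem \ref{theo:apply} gives Assumption \ref{A:3}; and the liquidation strategy $L^x$ shows that $\cA(x) \neq \varnothing$. Thus the substantive task is to verify Assumption \ref{A:1} for the CPT goal functional $J(\mu) = \cI(V_T^{x,B})$.

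The first preparatory step is to absorb the reference point into the state process. Since $W$ is $\cF_T$-measurable and $\cF_T = \sigma(\cN, Y_s, 0\le s\le T)$, one has $W = h(Y)$ for a Borel $h:\cD_m \to \bbr^d$. I would enlarge the Borel map $\phi$ from Theorem \ref{theo:embedding} to $\tilde\phi(Y) := (\phi(Y), h(Y))$, which takes values in the separable metric space $\cE \times \bbr^d$. The continuous constraint function $\Phi_x$ is unchanged and depends only on the $\phi$-coordinate, so Assumptions \ref{A:2} and \ref{A:3} continue to hold with this augmented embedding, and $\mathfrak{M}$ now records the joint distribution of $(Y, \phi(Y), W, B)$.

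Next, fix a weakly convergent sequence $\mu_n \to \mu$ in $\mathfrak{M}$. Because $\cD_m \times \cE \times \bbr^d \times \cV$ is a separable metric space, Skorokhod's representation theorem produces, on an auxiliary probability space, random tuples $(\hat Y^n, \hat \zeta^n, \hat W^n, \hat B^n) \to (\hat Y, \hat \zeta, \hat W, \hat B)$ almost surely, each $n$-tuple having distribution $\mu_n$ and the limit having distribution $\mu$. Lemma \ref{lemm:icont}, together with the continuity of $z \mapsto z(T)$ on the space $\cC^d_{++}$ of continuous paths, yields $V_T^{x,\hat B^n} \to V_T^{x,\hat B}$ a.s. Continuity of $\ell$ and of $u_\pm$ then gives $U_\pm(V_T^{x,\hat B^n} - \hat W^n) \to U_\pm(V_T^{x,\hat B} - \hat W)$ almost surely, hence in distribution, so the c.d.f.\ of $U_\pm(V_T^{x,B_n} - W^n)$ converges at every continuity point of the limiting c.d.f.

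Finally I would split $\cI = \cI_+ - \cI_-$. Boundedness of $U_+$ forces $U_+(V_T-W) \in [0,M]$ with $M := \|U_+\|_\infty$, so the integrand $w_+(P(U_+(V_T^n - W^n) > x))$ is bounded by $1$ and vanishes on $]M,\infty[$; dominated convergence on $[0,M]$, combined with the continuity of $w_+$ and the a.e.\ convergence of c.d.f.'s, gives $\cI_+(V_T^{x,B_n}) \to \cI_+(V_T^{x,B})$. For $\cI_-$, Fatou's lemma applied to the non-negative integrand $w_-(P(U_-(V_T^n - W^n) > x))$ delivers $\cI_-(V_T^{x,B}) \le \liminf_n \cI_-(V_T^{x,B_n})$. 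Subtracting these bounds yields $\limsup_n J(\mu_n) \le J(\mu)$, which is Assumption \ref{A:1}; Theorem \ref{theo:main} then produces the required $B^\dagger$.

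The main obstacle is handling the random reference point $W$: being only $\cF_T$-measurable, it is generally not a continuous functional of $Y$, which would break the passage to the limit. The trick of absorbing $h$ into the Borel embedding map and then transferring weak to joint a.s.\ convergence via Skorokhod's theorem is what allows the subsequent continuity/Fatou arguments to close. A secondary subtlety is that $\cI_-$ may be infinite, so one only obtains lower semicontinuity of $\cI_-$ and, correspondingly, upper (rather than full) semicontinuity of $\cI$, which is exactly what Assumption \ref{A:1} demands.
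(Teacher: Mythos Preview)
Your proposal is correct and follows essentially the same route as the paper: enlarge $\phi$ so that the reference point $W$ becomes part of the state (the paper writes $\phi(Y)=(S,W)$), invoke Theorems \ref{theo:embedding} and \ref{theo:apply} for Assumptions \ref{A:2}--\ref{A:3}, and verify Assumption \ref{A:1} by combining the boundedness of $U_+$ with a Fatou-type argument on the two pieces $\cI_\pm$. Your use of the Skorokhod representation and Lemma \ref{lemm:icont} merely spells out in detail what the paper compresses into the single sentence ``Since $U_+$ is bounded, Assumption \ref{A:1} holds due to the Fatou lemma.''
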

\begin{proof}
    We set $\phi: \cD_m \to \cC^d \times \bbr^d$ in the following way: $\phi(Y) = (\phi^1(Y), \phi^2(Y)) = (S, W)$. This function exists thanks, again, to the Doob lemma. We put
    \begin{equation*}
    G_\pm(y, z, b) := U_\pm(I(y, z, b) - \phi^2(y))
    \end{equation*}
    and define $G_{\pm*} \colon \mathfrak{M} \to \cP(\bbr_+)$ as $G_{\pm*}(\mu)(A) := \mu(G^{-1}(A))$. Finally, the goal functional is
    \beq
    \label{eq:CPT_measure}
    J(\mu) = \int_0^\infty w_+ \left(1 - F_{G_+(\mu)} (x) \right) dx - \int_0^\infty w_- \left(1 - F_{G_-(\mu)} (x) \right) dx. 
    \eeq   
    Since $U_+$ is bounded, Assumption \ref{A:1} holds due to the Fatou lemma. Due to Theorems \ref{theo:main} \ref{theo:embedding} and \ref{theo:apply}, the assertion follows.
\end{proof}

\section{Behavioral storage management problem}
\label{Section:storage}

In this section, we study a storage management problem using the framework of Section \ref{section:general}. The main contribution of this section is the well-posedness of the storage management problem under cumulative prospect theory preferences.

We adopt the stochastic basis and notation from Section \ref{section:general}. 
Fix $k \in \bbn$.
Let $X$ be an $\bbf$-adapted $\bbr^k$-valued process with continuous paths (e.g. a diffusion process) with initial value $X_0 = x$. Process $X$ represents stochastic demand, whereas $x$ is the initial state of the uncontrolled storage. We also fix $K > 0$.

\begin{definition}
  A pair of $\bbr^d$-valued process $L := (L^+, L^-)$ is called an admissible strategy if they are $\bbg$-adapted, right-continuous, increasing, and the total variation
  \begin{equation} \label{eq:adm}
    \sum_{j=1}^d \left(L^{+,j}_T + L^{-,j}_T \right) \leq K.
\end{equation}
\end{definition}

Throughout this section, ``increasing'' is understood to mean ``nondecreasing''.

We denote $\cA$ the set of all admissible strategies.
Note that right-continuous increasing processes have left limits. The constraint \eqref{eq:adm} was inspired by \citep{harrison1983instantaneous,benevs1980some}. This reflects that the capacity to adjust the storage is inherently limited.

The storage evolves as
\begin{equation}
    Z_t := X_t + A_+ L^+_t - A_- L^-_t, 
\end{equation}
where $A_\pm = (a_{\pm,ij}) \in \bbr^{k \times d}$ are matrices of size $k \times d$. Note that $L^\pm_0$ is not necessarily zero, so an instantaneous jump at time $0$ is allowed: 
\[
Z_0 = x + A_+ L^+_0 - A_- L^-_0.
\]

The running cost of maintenance of this storage is given by a functional
\begin{equation}
   W = W^L := \int_0^T g(s, Z_s) ds + \int_{[0,T]} h_+(s)  d L^+_s + \int_{[0,T]} h_-(s) d L^-_s + G(Z_T),
\end{equation}
where $h_\pm \colon [0, T] \to \bbr_+^d$ are continuous functions and $g \colon [0, T] \times \bbr^k \to \bbr_+$ and $G \colon \bbr^k \to \bbr_+$ are bounded continuous functions. The notation $\int_0^T h_\pm(s)  d L^\pm_s$ means
\[
 \int_{[0,T]} h_\pm(s)  d L^\pm_s = \sum_{j=1}^d \int_{[0,T]} h_\pm^j(s) dL^{\pm,j}_s.
\]

In particular, if $d=k=1$ and $A_\pm = 1$, this reduces to the one-dimensional model studied by \citep{harrison1983instantaneous}:
\begin{equation*}
    Z_t := X_t +  L^+_t - L^-_t.
\end{equation*}

Note that by the Doob lemma, $X = \phi(Y)$ for some Borel function $\phi: \cD_m \to \cC_k$, and $Y$ was defined in Section \ref{section:general}. 

Let 
\[
\mathfrak{M} := \left\{ \cL_\P(Y, \phi(Y), L) \colon L \in \cA \right\}
\]
be a family of probability measures on the product space $\cD_m \times \cC_k \times \cV^2$.

\begin{lemma}  \label{lemm:compact_storage}
    The family of probability measures $\mathfrak{M}$ is compact in the space $\cP(\cD_m \times \cC_k \times \cV^2)$ and, as such, Assumption \ref{A:3} holds.
\end{lemma}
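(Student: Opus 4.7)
The plan is to reduce Lemma \ref{lemm:compact_storage} to Theorem \ref{theo:general} by embedding the storage problem into the general framework of Section \ref{section:general}. First I would identify a pair $L = (L^+, L^-)$ of increasing $\bbr^d$-valued RCLL processes with a single $\bbr^{2d}$-valued c\`adl\`ag process of finite variation, so that $L$ lies in $\mathfrak{B}$ with $d$ replaced by $2d$. I take $\cE := \cC_k$ and invoke the Doob--Dynkin lemma to fix a Borel map $\phi \colon \cD_m \to \cC_k$ with $X = \phi(Y)$ a.s., as the paper already noted.

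To encode admissibility as a single upper-semi-continuous constraint, let $A \subset \cV$ (in its $\bbr^{2d}$-valued incarnation, identified with $\cV^2$) be the set of paths $(f^+, f^-)$ whose components are non-decreasing with the convention $f^{\pm,j}(0-) := 0$ and that satisfy $\sum_{j=1}^d (f^{+,j}(T) + f^{-,j}(T)) \leq K$. Setting $\Phi_i(y, x, b) := -\mathbf{1}_{\cV \setminus A}(b)$ for every $i \in \bbn$ gives an $\bbr^\bbn$-valued function for which $\{\Phi(Y, \phi(Y), B) \in \bbr_+^\bbn\}$ coincides with admissibility in the sense of Section \ref{Section:storage}. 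Assumption \ref{A:2} then reduces to the closedness of $A$ in the Meyer--Zheng topology. The terminal inequality is preserved because $d_{MZ}$ contains the separate term $|f(T) - g(T)| \wedge 1$, so evaluation at $T$ is continuous. Monotonicity passes to limits via a Riesz-subsequence argument: if $f_n \to f$ in $d_{MZ}$ with every $f_n^{\pm, j}$ non-decreasing, then a subsequence converges Lebesgue-a.e.\ on $[0, T)$; monotonicity survives on this full-measure set, and the RCLL property of the limit extends it to all of $[0, T]$.

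Assumption \ref{A:3} is then essentially free: every admissible $L$ satisfies $\text{Var}_T L = \sum_j (L^{+, j}_T + L^{-, j}_T) \leq K$ deterministically, so $\{\text{Var}_T L \colon L \in \cA\}$ is uniformly bounded and hence tight. Theorem \ref{theo:general} now delivers the sequential compactness of $\mathfrak{M}$ in $\cP(\cD_m \times \cC_k \times \cV^2)$, and the same deterministic variation bound yields the Assumption \ref{A:3} clause of the lemma. The main obstacle I anticipate is precisely the Meyer--Zheng closedness of the monotonicity constraint, because pointwise evaluation is not continuous under $d_{MZ}$, so inequalities of the form ``$f(s) \leq f(t)$'' cannot be encoded as u.s.c.\ coordinates of $\Phi$ directly; the Riesz detour above is the technical ingredient that keeps the abstract framework applicable here.
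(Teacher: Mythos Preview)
Your argument is correct but takes a different route from the paper. The paper's proof is direct and three lines long: the set $V_K := \{(x_1, x_2) \in \cV^2 : {\rm Var}\, x_1 + {\rm Var}\, x_2 \leq K\}$ is Meyer--Zheng compact by Helly's theorem, every admissible $L$ lies in $V_K$ almost surely, so $\mathfrak{M}$ is tight and Prokhorov yields (relative) compactness; the same deterministic variation bound gives Assumption~\ref{A:3}. You instead route everything through Theorem~\ref{theo:general} by verifying Assumptions~\ref{A:2} and~\ref{A:3} simultaneously. This buys you genuine sequential compactness of $\mathfrak{M}$ (closedness included), which the paper's bare Prokhorov step does not deliver on its own; in the paper, closedness is obtained only later from the combination of Lemma~\ref{lemm:adm_storage} with Theorem~\ref{theo:general}. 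The cost is that you effectively re-prove Lemma~\ref{lemm:adm_storage} inside this lemma, and with a different encoding of $\Phi$: you take $\Phi_i = -\mathbf{1}_{\cV\setminus A}$ and reduce upper-semicontinuity to Meyer--Zheng closedness of the admissible path set $A$ via a Riesz-subsequence argument, whereas the paper (in Lemma~\ref{lemm:adm_storage}) encodes monotonicity through integrals $\int f_i\,dl$ against a countable family of nonnegative continuous test functions. Both encodings are valid.
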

\begin{proof}
    Let $V_K:= \{ (x_1, x_2) \in \cV^2 \colon {\rm Var} \, x_1 + {\rm Var} \, x_2 \leq K \}$. By the Helly theorem, $V_K$ is compact in the Meyer--Zheng topology. Since $L \in V_K$ with probability one for all $L \in \cA$, the family $\mathfrak{M}$ is tight, and, thanks to the Prokhorov theorem, compact. 
\end{proof}

\begin{lemma} \label{lemm:adm_storage}
    The set $\cA$ satisfies Assumption \ref{A:2}. 
\end{lemma}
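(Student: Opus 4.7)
My plan is to construct, within the framework of Section \ref{section:general}, an upper-semi-continuous family $\Phi = (\Phi_i)_{i \in \bbn}$ on $\cD_m \times \cC_k \times \cV^2$ such that $\Phi \in \bbr_+^\bbn$ characterises admissibility \eqref{eq:adm}. The total-variation bound is encoded by a single continuous constraint
\[
\Phi_1(y, x, b^+, b^-) := K - \sum_{j=1}^d \bigl( b^{+,j}(T) + b^{-,j}(T) \bigr);
\]
continuity in the Meyer--Zheng topology follows from the $\min(|f(T) - g(T)|, 1)$ summand in $d_{MZ}$, which makes $b \mapsto b(T)$ continuous. Let $\cM \subset \cV$ denote the subset of componentwise non-decreasing functions (the convention $f(0^-) = 0$ built into $\cV$ forces $f(0) \geq 0$ for $f \in \cM$). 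I would encode monotonicity of the two control components by the distance functions
\[
\Phi_2(y, x, b^+, b^-) := -d_{MZ}(b^+, \cM), \qquad \Phi_3(y, x, b^+, b^-) := -d_{MZ}(b^-, \cM),
\]
and set $\Phi_i \equiv 1$ for $i \geq 4$. The distance to a non-empty set is $1$-Lipschitz, so $\Phi_2$ and $\Phi_3$ are continuous.

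The core step is to verify that $\cM$ is closed in $(\cV, d_{MZ})$, so that $d_{MZ}(b, \cM) = 0$ precisely when $b \in \cM$. Suppose $b_n \in \cM$ and $b_n \to b \in \cV$ in $d_{MZ}$. Since Meyer--Zheng convergence is convergence in the measure $\mathrm{Leb}([0, T]) + \delta_T$, the Riesz theorem yields a subsequence $(b_{n_k})$ converging pointwise on a set $E \subset [0, T]$ of full Lebesgue measure, as well as at $T$. For $s \leq t$ in $E \cup \{T\}$, monotonicity $b_{n_k}(s) \leq b_{n_k}(t)$ passes to the limit $b(s) \leq b(t)$. Density of $E \cup \{T\}$ in $[0, T]$, combined with right-continuity of $b$, extends this to monotonicity on all of $[0, T]$; similarly $b_{n_k} \geq 0$ on $E$ gives $b(0) \geq 0$ by right-continuity. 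Hence $b \in \cM$.

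Once this closedness is in hand, $(b^+, b^-) \in \cV^2$ satisfies $\Phi(\cdot, b^+, b^-) \in \bbr_+^\bbn$ if and only if $b^\pm \in \cM$ and $\sum_{j}(b^{+,j}(T) + b^{-,j}(T)) \leq K$, which is exactly the admissibility condition modulo the $\bbg$-adaptedness already encoded in $\mathfrak{B}$. All components $\Phi_i$ are continuous and in particular upper-semi-continuous, so Assumption \ref{A:2} holds. The only technical point is the monotonicity-preservation argument of the middle paragraph; the rest is routine manipulation of the Meyer--Zheng metric.
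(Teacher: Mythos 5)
Your construction is correct, but it encodes monotonicity by a genuinely different device than the paper. The paper tests the control measures against a countable family of positive bounded continuous functions, setting $\Phi_i(l) := \int_{[0,T]} f_i\,dl$, so that non-negativity of all these integrals characterises increasingness (including $l(0)\ge 0$ via the atom at $0$), and it handles the budget constraint with the same terminal-value functional $K - l^+_T - l^-_T$ that you use; the continuity of those integral functionals is what forces the paper's appeal to ``Meyer--Zheng convergence implies weak convergence,'' which in substance rests on the uniform variation bound (a Helly--Bray type argument, as in the proof of Theorem \ref{theo:embedding}). You instead take the negative distance $-d_{MZ}(\cdot,\cM)$ to the cone $\cM$ of monotone paths and reduce everything to closedness of $\cM$ under convergence in measure, which you prove correctly via the Riesz subsequence argument plus right-continuity of the limit. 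This buys you constraints that are globally $1$-Lipschitz on all of $\cV^2$, so no variation bound is needed for their continuity, and only two nontrivial monotonicity constraints; the paper's route yields a countable linear family in the spirit of the construction in Theorem \ref{theo:embedding} and exhibits positivity of $dl^\pm$ directly. One small point of hygiene: define $\cM$ explicitly as the set of componentwise non-decreasing paths with $f(0)\ge 0$ (equivalently, non-decreasing after prepending $f(0-)=0$), since plain monotonicity on $[0,T]$ does not by itself force $f(0)\ge 0$; with that reading, and since your closedness argument also passes the non-negativity to the limit, your $\Phi\in\bbr_+^{\bbn}$ characterisation matches the paper's admissibility condition exactly.
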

\begin{proof}
    Convergence of functions from $\cV$ in the Meyer--Zheng topology implies weak convergence. Choose a countable family of positive bounded continuous functions $(f_i)_{i \in \bbn} \subset \cC_{2d}$ defining weak convergence in $\cV^2$.
    Then we set 
    \[
    \Phi_{i}(l) := \int_{[0,T]} f(t) dl(t).  
    \]
    Note that $l$ is increasing if and only if $\Phi_{i} (l) \in \bbr_+$ for all $i$.
    Also, we put
    \[
    \Phi_\infty(l^+, l^-) := K - l^+_T - l^-_T = K - \int_{[0,T]} (dl^+(t) + dl^-(t)).
    \]
    $l$ satisfies \eqref{eq:adm}.
\end{proof}

We define a mapping $\psi \colon \cC_k \times \cV^2 \to \cD_k$ 
\begin{equation}
    \psi(y, l^+, l^-) :=  y + A^+ l^+ - A^- l^-.
\end{equation}
Define $\Theta \colon \cC_k \times \cV^2  \to \bbr_+$ by
\begin{multline}
    \Theta(y, l^+, l^-) := \int_0^T g(t,\psi(y,l^+,l^-)(t)) dt + \int_0^T h_+(t)  d l^+_t + \\ \int_0^T h_-(t)  d l^-_t + G(\psi(y,l^+,l^-)(T)).
\end{multline}
It follows directly from the boundedness of $h_\pm$, $g$ and $G$ that  $\Theta$ is bounded on $\cC_k \times \cH_K$, where 
$$
\cH_K = \{(f_1,f_2) \in \cV^2 \colon \sum_{j=1}^d \left( {\rm Var}_T \, f^j_1 + {\rm Var}_T \, f^j_2 \right) \leq K \}.
$$
\begin{lemma} \label{lemm:icont_storage}
    The function $\Theta$ is continuous.
\end{lemma}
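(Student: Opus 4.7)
The plan is to mimic the argument of Lemma \ref{lemm:icont}, exploiting the fact that controls in $\cA$ have uniformly bounded variation so that the Helly selection principle applies. Fix a sequence $(y_n, l^+_n, l^-_n) \to (y, l^+, l^-)$ in $\cC_k \times \cV^2$. By the subsequence principle, it is enough to prove that from every subsequence one can extract a further subsequence along which $\Theta(y_n, l^+_n, l^-_n) \to \Theta(y, l^+, l^-)$.

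Pick an arbitrary subsequence. Convergence of $l^\pm_n$ in the Meyer--Zheng metric is convergence in the measure ${\rm Leb}([0, T[) + \delta_T$, so by the Riesz theorem there is a further subsequence, still denoted by $l^\pm_{n}$, converging to $l^\pm$ pointwise on a dense countable subset $I \subset [0, T]$ which we may take to contain both $0$ and $T$. At this stage the proof splits according to the four summands of $\Theta$.

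For the first summand, notice that $\psi(y_n, l^+_n, l^-_n) \to \psi(y, l^+, l^-)$ in measure on $[0, T[$: $y_n \to y$ uniformly, and $A^\pm l^\pm_n \to A^\pm l^\pm$ in measure. Since $g$ is bounded and continuous, $g(\cdot, \psi(y_n, l^+_n, l^-_n)(\cdot)) \to g(\cdot, \psi(y, l^+, l^-)(\cdot))$ in measure and is uniformly bounded, so the dominated convergence theorem gives the convergence of $\int_0^T g(t, \psi_n(t))\,dt$. For the terminal term $G(\psi_n(T))$, Meyer--Zheng convergence yields $l^\pm_n(T) \to l^\pm(T)$, so $\psi_n(T) \to \psi(T)$ and the continuity of $G$ concludes. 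For the two Stieltjes integrals, observe that within the set $\cH_K$ (which is where admissible strategies live) the variations are uniformly bounded by $K$; combined with the pointwise convergence on the dense set $I \ni \{0, T\}$, Theorem 12.16 of \citep{protterFirstCourseReal1991}---the Helly--Bray theorem---yields $\int_{[0,T]} h_\pm(t)\,dl^\pm_n(t) \to \int_{[0,T]} h_\pm(t)\,dl^\pm(t)$, using continuity of $h_\pm$.

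The main obstacle is the last step: Meyer--Zheng convergence alone does not control Stieltjes integrals against $dl^\pm_n$. The key is that the domain of interest carries a uniform bound on $\sum_j({\rm Var}_T l^{+,j} + {\rm Var}_T l^{-,j})$, which together with pointwise convergence on a dense set (including the endpoints, so the atoms at $0$ and $T$ are handled) allows Helly--Bray to apply. Collecting the four pieces yields $\Theta(y_n, l^+_n, l^-_n) \to \Theta(y, l^+, l^-)$ along the extracted subsequence, and the subsequence principle finishes the proof.
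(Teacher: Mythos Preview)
Your proof is correct and follows precisely the route the paper indicates---mimic Lemma~\ref{lemm:icont} via the subsequence principle, the Riesz theorem, and Helly--Bray (Theorem~12.16 of \citep{protterFirstCourseReal1991})---and in fact supplies more detail than the paper, which simply omits the argument. Two minor remarks: the Meyer--Zheng metric only guarantees $T\in I$, not $0$, though this is harmless for the Stieltjes terms given the convention $l(0-)=0$; and your appeal to the uniform variation bound from $\cH_K$ means what you actually establish is continuity on $\cC_k\times\cH_K$ rather than on all of $\cC_k\times\cV^2$, which is in any case the only regime the paper ever uses.
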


This lemma is similar to Lemma \ref{lemm:icont}, so the detailed proof is omitted.

Define a pushforward mapping  $\Theta^* \colon \cP(\cD_m \times \cC_k \times \cV^2) \to \cP(\bbr_+)$ by $\Theta^*(\mu)(A) := \mu(\cD^k_T \times \Theta^{-1}(A))$. From the previous lemma, $\Theta^*$ is continuous on $\mathfrak{M}$.

Similarly to Section \ref{section:kabanov_setting}, we investigate well-posedness of different optimization problems. Recall that $W^L$ represents the expenses, so the economic agent would minimize them. 

\begin{example}[Expected Expenditures Minimization] \label{ex:eem}
    Consider a problem
    \begin{equation}
        \inf_{L \in \cA} \E W^L = -\sup_{L \in \cA} \E (-W^L). 
    \end{equation}

    where $\mu = \cL_\P(Y, \phi(Y), L)$ and $\mu \in \mathfrak{M}$. Since $\Theta^*$ is continuous on $\mathfrak{M}$,  Assumption \ref{A:1} holds. 
\end{example}

\begin{theorem}
    The problem of Example \ref{ex:eem} admits an optimal solution. 
\end{theorem}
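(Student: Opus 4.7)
The plan is to reduce the minimization problem to a direct application of Theorem \ref{theo:main}, since the three assumptions of the abstract framework have essentially been prepared by Lemmas \ref{lemm:compact_storage}, \ref{lemm:adm_storage} and \ref{lemm:icont_storage}. Concretely, I would set
$$
J(\mu) := -\int_{\cD_m \times \cC_k \times \cV^2} \Theta(x,l^+,l^-) \, d\mu(y,x,l^+,l^-),
$$
so that
$$
\inf_{L \in \cA} \E W^L \;=\; -\sup_{L \in \cA} J\bigl(\cL_P(Y,\phi(Y),L)\bigr),
$$
and then verify that $\cA$ is non-empty and that Assumptions \ref{A:1}--\ref{A:3} hold for this $J$.

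Non-emptiness of $\cA$ is immediate from the trivial choice $L^+ \equiv L^- \equiv 0$, which is $\bbg$-adapted, right-continuous, increasing, and fulfils \eqref{eq:adm} with equality zero. Assumption \ref{A:3} is supplied by Lemma \ref{lemm:compact_storage}; in fact the bound ${\rm Var}_T L^+ + {\rm Var}_T L^- \le K$ holds deterministically, so the family in question is uniformly bounded and hence trivially tight. Assumption \ref{A:2} is exactly the content of Lemma \ref{lemm:adm_storage}, whose countably many upper semi-continuous constraints encode monotonicity of $L^\pm$ together with the total variation cap \eqref{eq:adm}.

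The only non-automatic step is Assumption \ref{A:1}. Every $\mu \in \mathfrak{M}$ is supported in $\cD_m \times \cC_k \times \cH_K$, and $\cH_K$ is Meyer--Zheng compact by Helly's theorem, hence closed. Any weak limit of a sequence $\mu_n \in \mathfrak{M}$ is therefore also supported in this closed set. On $\cC_k \times \cH_K$ the functional $\Theta$ is continuous by Lemma \ref{lemm:icont_storage} and bounded by the stated boundedness of $g$, $G$ and $h_\pm$ together with the a priori variation bound, so the portmanteau theorem yields continuity of $\mu \mapsto \int \Theta\, d\mu$ on $\mathfrak{M}$; continuity trivially implies upper semi-continuity of $J$.

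With all hypotheses in place, Theorem \ref{theo:main} produces the desired optimizer $L^\dagger \in \cA$. The one genuinely technical subtlety does not lie in this proof itself but inside Lemma \ref{lemm:icont_storage}: Meyer--Zheng convergence is very weak, and passing to the limit in the Lebesgue--Stieltjes integrals $\int h_\pm \, dl^\pm$ requires first extracting a further subsequence converging pointwise on a countable dense set (via the Riesz theorem) and then invoking the stability of the Riemann--Stieltjes integral under such convergence, exactly as in the proof of Lemma \ref{lemm:icont}. Once that continuity is granted, the present argument is a purely mechanical check of the abstract assumptions.
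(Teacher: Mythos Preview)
Your proposal is correct and follows essentially the same route as the paper: both reduce to Theorem \ref{theo:main} via Lemmas \ref{lemm:compact_storage}, \ref{lemm:adm_storage}, \ref{lemm:icont_storage}, and both verify Assumption \ref{A:1} by observing that $\Theta$ is continuous and bounded on $\cC_k \times \cH_K$, so that weak convergence of measures supported there passes through the integral. The only cosmetic difference is that the paper factors through the pushforward $\Theta^*$ and replaces the identity by the bounded function $U(x)=\min(x,M)$ before invoking weak convergence, whereas you argue directly on the product space; these are equivalent formulations of the same truncation step.
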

\begin{proof}
    Note that 
    \[
    E W^L = J(\mu) := \int_{\bbr_+} x \Theta^*(\mu)(dx).
    \]
    Since $\Theta \leq M$ for some $M > 0$,
    \[
    J(\mu) = \int_{\bbr_+} U(x) \Theta^*(\mu)(dx), \quad U(x) = \min(x,M).
    \]
    Since $U$ is bounded, Assumption \ref{A:1} holds by the Fatou lemma. Thanks to Lemmas \ref{lemm:icont_storage} and \ref{lemm:adm_storage} and Theorem \ref{theo:main}, the assertion follows.
\end{proof}

\begin{example}[Goal Reaching]
    Consider a problem
    \begin{equation}
        \inf_{L \in \cA}  \P( W^L > b). 
    \end{equation}
    It follows that 
    \[
    \inf_{L \in \cA} \P( W^L > b) = 1 - \sup_{L \in \cA} \E U(W^L),
    \]
    where $U(x) = {\bf 1}_{x \leq b}$. This function is bounded and upper-semi-continuous, so this situation is similar to the previous example. 
\end{example}

The above examples illustrate the versatility of our methodology. We now turn to an example motivated by cumulative prospect theory.

\begin{example}[Cumulative Prospect Theory] \label{ex:cpt_s}
    Similarly to Example \ref{ex:cpt}, we define functions $I_\pm$ by \eqref{eq:I_cpt}. Then, we fix a $\cF_T$-measurable r.v. $\zeta \geq 0$. By the Doob lemma, $\zeta = \phi_2(Y)$ for some Borel function $\phi_2$. This variable represents a benchmark level for the agent. Let $U_\pm \colon \bbr_+ \to \bbr_+$ be continuous increasing functions, $U_\pm(0) = 0$. We assume that $U_+$ is bounded.
    We define 
    \begin{equation}
        \cI_\pm(W) := I_\pm(U_\pm(\zeta - W)^\pm), \quad \cI(W) := \cI_+(W) - \cI_-(W),
    \end{equation}
    where $(x)^+$ and $(x)^-$ are positive and negative parts of $x$ respectively.
    Then the optimization problem reads
    \begin{equation} \label{eq:cpt_s}
        \sup_{L \in \cA}  \cI(W^L).
    \end{equation}
\end{example}

\begin{theorem}
    Under the conditions of Example \ref{ex:cpt_s}, the problem \eqref{eq:cpt_s} admits an optimal solution.
\end{theorem}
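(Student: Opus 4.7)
The plan is to reduce Example \ref{ex:cpt_s} to the abstract framework provided by Theorem \ref{theo:main}. First I would encode the underlying data measurably: by Doob's lemma write $X = \phi_1(Y)$ and $\zeta = \phi_2(Y)$, and set $\phi(Y) := (\phi_1(Y), \phi_2(Y))$ as a Borel map $\cD_m \to \cC_k \times \bbr_+$. The admissibility constraints --- that $L^\pm$ are right-continuous and increasing with joint total variation at most $K$ --- are already captured by a countable family of upper-semi-continuous functionals $\Phi = (\Phi_i)_{i \in \bbn}$ as constructed in Lemma \ref{lemm:adm_storage}, so Assumption \ref{A:2} is inherited verbatim, while Lemma \ref{lemm:compact_storage} already gives Assumption \ref{A:3}.

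Next I would package the CPT objective as a functional on $\mathfrak{M}$. Introducing
\[
\Upsilon_\pm(y, z, l^+, l^-) := U_\pm\bigl((z - \Theta(y, l^+, l^-))^\pm\bigr),
\]
the continuity of $\Theta$ from Lemma \ref{lemm:icont_storage} and of $U_\pm$ makes $\Upsilon_\pm$ continuous on $\cC_k \times \bbr_+ \times \cH_K$. Pushing forward via $\Upsilon_\pm$ and applying the distorted integrals from \eqref{eq:I_cpt}, I set
\[
J(\mu) := \int_0^\infty w_+\bigl(1 - F_{\Upsilon_{+,*}(\mu)}(x)\bigr)\,dx - \int_0^\infty w_-\bigl(1 - F_{\Upsilon_{-,*}(\mu)}(x)\bigr)\,dx,
\]
so that $J(\cL_\P(Y, \phi(Y), L)) = \cI(W^L)$ for every $L \in \cA$ by construction.

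The remaining --- and most delicate --- step is verifying Assumption \ref{A:1}, the upper-semi-continuity of $J$ on $\mathfrak{M}$. If $\mu_n \to \mu$ weakly, then the continuity of $\Upsilon_\pm$ yields the weak convergence $\Upsilon_{\pm,*}(\mu_n) \to \Upsilon_{\pm,*}(\mu)$, hence $F_{\Upsilon_{\pm,*}(\mu_n)}(x) \to F_{\Upsilon_{\pm,*}(\mu)}(x)$ at every continuity point of the limit c.d.f., i.e.\ Lebesgue-almost everywhere, and continuity of $w_\pm$ transfers this convergence to the integrands.

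The main obstacle I anticipate is the asymmetry between the two tails of the CPT functional. For the gain side, the boundedness of $U_+$ by some $M>0$ confines the integrand to $[0,M]$ where it is dominated by $1$, so bounded convergence gives outright continuity of $\cI_+$ on $\mathfrak{M}$. For the loss side, $U_-$ is only assumed continuous, so only Fatou's lemma is available; applied to the a.e.~pointwise convergence above it delivers $\liminf_n \cI_-(\mu_n) \geq \cI_-(\mu)$, i.e.\ lower-semi-continuity of $\cI_-$. Combining the two yields $\limsup_n J(\mu_n) \leq J(\mu)$, exactly Assumption \ref{A:1}. With all three abstract hypotheses in place and $\cA \ne \varnothing$ (the null strategy is admissible), Theorem \ref{theo:main} supplies an optimizer $L^\dagger \in \cA$.
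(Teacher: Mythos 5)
Your proposal is correct and follows essentially the same route as the paper: reduce to Theorem \ref{theo:main} via Lemmas \ref{lemm:compact_storage}, \ref{lemm:adm_storage} and \ref{lemm:icont_storage}, push the measures forward through the (continuous) CPT payoff maps, and obtain Assumption \ref{A:1} from the boundedness of $U_+$ together with Fatou on the loss side. Your only deviations are cosmetic refinements — packaging the benchmark $\zeta = \phi_2(Y)$ into the second coordinate $\phi(Y)$ and spelling out the bounded-convergence/Fatou split that the paper states in one line — so no further comparison is needed.
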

\begin{proof}
    We put
    \[
    G_\pm(y, l^+, l^-) := U_\pm(\Theta(\phi(y), l^+, l^-) - \phi_2(y))
    \]
    and define $G_{\pm*} \colon \mathfrak{M} \to \cP(\bbr_+)$ as $G_{\pm*}(\mu)(A):=\mu(G^{-1}_\pm(A))$. The goal functional can be written as
    \begin{equation} \label{eq:cpt_s}
          J(\mu) := \int_0^\infty w_+(1 - F_{G_{+*}(\mu)}(x))dx - \int_0^\infty w_-(1 - F_{G_{-*}(\mu)}(x))dx,
    \end{equation} 
    where $F_\mu$ is the cumulative distribution function of probability measure $\mu \in \cP(\bbr_+)$.
    By boundedness of $U_+$ and the Fatou lemma, $J$ satisfies \ref{A:1}. Thanks to Lemmas \ref{lemm:icont_storage} and \ref{lemm:adm_storage} and Theorem \ref{theo:main}, the result follows.
\end{proof}

\section{Proofs}
\label{Section:proof}

\subsection{Proof of Theorem \ref{theo:general}}
\label{section:proof_general}
 We put ${\rm Var}_T B := \sum_{j=1}^d  {\rm Var}_T B^j$ the total variation of $B \in \mathfrak{B}$. Let $(\mu_n)_{n=1}^\infty \subseteq \mathfrak{M}$ be an arbitrary sequence. By definition, we have $B^n \in \cA$ such that $\cL_P (Y, \phi(Y), B^n ) = \mu_n$.

By virtue of Assumption \ref{A:3} and the Helly theorem for $\cV$, the family of r.v. $((Y, \phi(Y), B^n))_{n=1}^\infty$ is tight and, thanks to the Prokhorov theorem, relatively compact. Thus, there is a cluster point $\mu$ and, up to a subsequence, $\cL_P(Y, \phi(Y), B^n) \to \mu$. With an abuse of notation, the additional subscripts for subsequences are omitted.

By the Skorokhod representation theorem, there are another probability space $(\tilde \Omega, \tilde \cF, \tilde P)$ and r.v. $((\tilde Y^n, \tilde \zeta^n, \tilde B^n))_{n=1}^\infty$ and $(\tilde Y^\dagger, \tilde \zeta^\dagger, \tilde B^\dagger)$ such that the distributions of $\cL_{ \tilde P} (\tilde Y^n, \tilde \zeta^n, \tilde B^n) = \cL_P (Y, \phi(Y), B^n)$ for all $n$, $\cL_{\tilde P} (\tilde Y^\dagger, \tilde \zeta^\dagger, \tilde B^\dagger ) = \mu$ and the sequence $(\tilde Y^n, \tilde \zeta^n, \tilde B^n) \to (\tilde Y^\dagger, \tilde \zeta^\dagger, \tilde B^\dagger)$ a.s. in $\cD_m \times \cE \times \cV$.

After that, thanks to the upper semi-continuity of $\Phi$, we have 
\begin{equation*}
 \limsup_{n \to \infty} \Phi(\tilde Y^n, \tilde \zeta^n, \tilde B^n) \leq \Phi(\tilde Y^\dagger, \tilde \zeta^\dagger, \tilde B^\dagger), \quad {\rm a.s.}   
\end{equation*}
(the comparison is componentwise) and, as such, $\Phi(\tilde Y^\dagger, \tilde \zeta^\dagger, \tilde B^\dagger) \in \bbr^\bbn_+$ a.s. 

Now we "resettle" process $\tilde B^\dagger$ to the original probability space. Consider a regular conditional distribution $Q: \cD_m \times \cB(\cE \times \cD_{d, MZ}) \to [0, 1]$ of $\tilde \zeta^\dagger \times \tilde B^\dagger$ with respect to $\tilde Y^\dagger$. By Lemma 4.22 of \citep{kallenbergFoundationsModernProbability2021}, we have $Q(y, \cdot) = \cL_P(h(y, \xi))$ for some measurable function $h$ and all $y \in \cD_m$. We define $(\zeta^\dagger, B^\dagger) := h(Y, \xi)$. It follows that the laws $\cL_P (Y, \zeta^\dagger, B^\dagger ) = \cL_{\tilde P}(\tilde Y^\dagger, \tilde \zeta^\dagger, \tilde B^\dagger)$.

Note that $\cL_P (\zeta^\dagger , \xi) = \cL_P (\psi(Y), \xi)$ and  $\cL_P(\psi(Y), \xi) = \cL_P(\psi(Y)) \otimes \cL_P(\xi)$ implying that $\zeta^\dagger$ and $\xi$ are independent. Thanks to Lemma 29 of \citep{chauSkorohodRepresentationTheorem2017}, $\zeta^\dagger$ is $\sigma\{Y\}$-measurable. Hence, $\zeta^\dagger = \phi^\dagger(Y)$ a.s. for some Borel function $\phi$. It can be easily established that $\phi^\dagger(Y) = \phi(Y)$ a.s. 

It remains to verify that $B^\dagger$ is $\bbg$-adapted. Following the lines of \citep{chauSkorohodRepresentationTheorem2017}, we define $_t \! Y_s := (Y_s - Y_t) {\bf 1}_{[t, \infty)}(s)$ and $Y^t_s := Y_{s \wedge t}$. Then $B^\dagger_t$ is independent of $_t \! Y$. Indeed, let $t \leq u_1 < \dots < u_k \leq T$. Then
\begin{multline*}
\cL_{\tilde P}(\tilde B^n_t, \tilde Y^n_{u_1} - \tilde Y^n_t, \dots, \tilde Y^n_{u_k} - \tilde Y^n_t ) = \cL_P ( B^n_t,  Y_{u_1} - Y_t, \dots,  Y_{u_k} -  Y_t ) = \\
\cL_P(B^n_t) \otimes \cL_P ( Y_{u_1} - Y_t, \dots,  Y_{u_k} -  Y_t ) = \cL_{\tilde P}(\tilde B^n_t) \otimes \cL_{\tilde P}( \tilde Y^n_{u_1} - \tilde Y^n_t, \dots, \tilde Y^n_{u_k} - \tilde Y^n_t).
\end{multline*}
By approaching $n \to \infty$, we obtain
\begin{equation*}
\cL_{\tilde P}(\tilde B^\dagger_t, \tilde Y^\dagger_{u_1} - \tilde Y^\dagger_t, \dots, \tilde Y^\dagger_{u_k} - \tilde Y^\dagger_t  ) = 
\cL_{ \tilde P}(\tilde B^*_t ) \otimes \cL_{\tilde P}( \tilde Y^\dagger_{u_1} - \tilde Y^\dagger_t, \dots, \tilde Y^\dagger_{u_k} - \tilde Y^\dagger_t )
\end{equation*}
and, as such,
\begin{equation*}
\cL_P (B^\dagger_t,  Y_{u_1} -  Y_t, \dots,  Y_{u_k} -  Y_t ) = 
\cL_P ( B^\dagger_t ) \otimes \cL_P (  Y_{u_1} -  Y_t, \dots,  Y_{u_k} -  Y_t)
\end{equation*}
By construction, $B^\dagger_t$ is $\sigma \{ \,_t \! Y, Y^t, \xi \}$-measurable. Again, due to Lemma 29 of \citep{chauSkorohodRepresentationTheorem2017}, it follows that  $B^\dagger_t$ is $\sigma \{ Y^t, \xi \}$-measurable. Thus $B^\dagger$ is $\bbg$-adapted. Hence, $B^\dagger \in \cA$, and this concludes the proof.

\subsection{Proof of Theorem \ref{theo:embedding}}
Let $\cC_{++}$ a family of strictly positive functions from $\cC_1$.

At first, recall that $S$ is $\bbf$-adapted and $\bbf$ is generated by $Y$, hence, due to the Doob lemma, $S = \phi(Y)$, where $\phi: \cD_m \to \cC^d_{++}$ is a Borel function. Now it is required to transplant the restrictions into the language of the general framework. To this end, we select dense countable sets $\{ a_k \}_{k=1}^\infty \subset -K^*$ and $I \subset [0, T]$ with $T \in I$. Also, we select a dense countable set $\{f_l \}_{l=1}^\infty \subset C([0, T], \bbr_+)$. For $k, \, l \in \bbn$ with $t_i \leq t_j$, put 
\begin{equation*}
\Phi_{kl}(y, z, b) := a_k \int_{[0,T]} f_l d b.
\end{equation*}
The functionals $\Phi_{kl}: \cD_m \times \cC^d_{++} \times \cV \to \bbr$ are obviously continuous. Clearly, $B$ is $K$-decreasing iff $\Phi_{kl} (Y, S, B) \geq 0$ for all indices. Indeed, a signed measure $\mu$ on $[0, T]$ is positive iff $\mu(f_l) \geq 0$ for all $l$.

Now let us formalise the admissibility condition. Fix $x \in K$ and define 
\begin{equation*}
\Phi'_{jk}(y, z, b) := a_k \left( \left( x +  \int_{[0, t_j]} \frac{1}{z(s)} db(s) \right) \odot z \right),
\end{equation*}
where $1/z = (1/z^1, \dots, 1/z^d)$ is a componentwise quotient. Immediately, $\wh V \in \wh K$ iff $\Phi'_{jk}(Y, S, B) \geq 0$ for all indices.
By virtue of Theorem 12.16 of \citep{protterFirstCourseReal1991}, the functional $\Phi'_{jk}(y, z, b): \cD_m \times \cC^d_{++} \times \cV \to \bbr$ is continuous. Finally, define
\begin{equation*}
\Phi(y, z, b) := ( \Phi_{kl}(y, z, b), \, \Phi'_{jk}(y, z, b)  )_{jkl \in \bbn}.
\end{equation*}
It is easily seen that $\Phi$ satisfies Assumption \ref{A:2}.
With the above notation, an adapted right-continuous process of finite variation $B$ is an admissible strategy if and only if the value $\Phi(Y, \phi(Y), B) \in \bbr^\bbn_+$ a.s.

\subsection{Proof of Theorem \ref{theo:apply}}

In order to establish tightness, we show an estimate based on the $\varepsilon$-CPS. Put $\Lambda := K^* \cap \{ z \in \bbr^d: \, z^{(1)} = 1  \}$. Note that $\Lambda$ is compact as $K^* \setminus \{ 0 \} \subset {\rm int} \, \bbr^d_+$. Since $\cF_0$ is a null sigma algebra, $Z_0$ is a constant. Without loss of generality, we may assume that $Z^{(1)}_0 = 1$.
We introduce a function:
\begin{equation}
\wp (x) = \inf \{ y \in \bbr: \: ye_1 - x \in K \}.
\end{equation}
Henceforth, we will refer to it as purchase function. Standard arguments (see, e.g., \citep{bouchardUtilityMaximizationReal2002}) imply the following dual representation:
\begin{equation}
\wp (x) = \sup_{y \in \Lambda} xy. 
\end{equation}
Note that the liquidation function has a similar representation (see, again, \citep{bouchardUtilityMaximizationReal2002})
\begin{equation}
\ell (x) = \inf_{y \in \Lambda} xy. 
\end{equation}
Immediately, $\wp (x) = - \ell (-x)$.

\begin{lemma}
\label{lemma:4.2}
Let Assumption \ref{A:2} hold and let $Z$ be an $\varepsilon$-CPS with $Z^{(1)}_0 = 1$. Let \\$Q = Z^{(1)}_T P$ be an equivalent probability measure. Finally, fix $x \in {\rm int} \, K$ and let \\$B \in \cA(x)$. Then 
\beq
\E_Q {\rm Var}_T B \leq \frac{1}{\varepsilon} \wp (x).
\eeq

\end{lemma}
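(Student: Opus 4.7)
The plan is to sandwich the quantity $-\int_{[0,T]} (Z_s/S_s) \cdot dB_s$ from both sides: from above by exploiting the duality between $K$ and $K^*$ through the process $M_t := Z_t \cdot \wh V_t^{x,B}$, and from below by the $\varepsilon$-interior condition defining an $\varepsilon$-CPS. The key identity comes from integration by parts. Since $Z$ is a cadlag $\bbf$-martingale with $Z_{0-}=Z_0$, $\wh V$ is right-continuous of finite variation with $\wh V_{0-} = x$, $\wh V_0 = x + B_0$, and $S$ is continuous, the Jacod--Shiryaev integration-by-parts formula (with the $[Z,\wh V]$-term collapsing into the $\int Z \, d\wh V$ piece because $\wh V$ is of finite variation) yields
\begin{equation*}
M_t \;=\; Z_0 \cdot x \;+\; \int_{[0,t]} \wh V_{s-} \cdot dZ_s \;+\; \int_{[0,t]} \tfrac{Z_s}{S_s} \cdot dB_s .
\end{equation*}
Moreover $M_t = (Z_t/S_t) \cdot V^{x,B}_t \geq 0$ a.s.\ because $V^{x,B}_t \in K$ by admissibility and $Z_t/S_t \in \wh K_t^{*}$.

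I then use the $\varepsilon$-interior condition. Writing $dB = \dot{B}\, d{\rm Var}\,B$ with $-\dot B \in K$ up to evanescence (Lemma 3.6.1 of \citep{kabanovMarketsTransactionCosts2010}), the defining property of $\varepsilon\text{-}{\rm int}\,K^{*}$ applied to $y = -\dot B_s$ gives
\begin{equation*}
-\tfrac{Z_s}{S_s}\cdot \dot B_s \;\geq\; \varepsilon\, |Z_s/S_s| \, |\dot B_s| \;\geq\; \varepsilon \, Z_s^{(1)}\, |\dot B_s|,
\end{equation*}
where in the last step I use $|Z_s/S_s| \geq Z_s^{(1)}/S_s^{(1)} = Z_s^{(1)}$. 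Integrating against $d{\rm Var}_s B$ and combining with the $M_T \geq 0$ bound on the IBP decomposition,
\begin{equation*}
\varepsilon \int_{[0,T]} Z_s^{(1)}\, d{\rm Var}_s B \;\leq\; -\!\int_{[0,T]} \tfrac{Z_s}{S_s}\cdot dB_s \;\leq\; Z_0 \cdot x + \int_{[0,T]}\wh V_{s-} \cdot dZ_s .
\end{equation*}

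To take expectations, I localize along $\tau_n := \inf\{t\colon |\wh V_t| \geq n\}\wedge T$: on $[0,\tau_n]$ the integrand $\wh V_{s-}$ is bounded so the stopped local martingale is a true martingale of zero mean, and Fatou's lemma together with $\tau_n \uparrow T$ a.s.\ (the paths of $\wh V$ are bounded on $[0,T]$) gives $\varepsilon\, \E_P \!\int_{[0,T]} Z_s^{(1)}\, d{\rm Var}_s B \leq Z_0 \cdot x$. A second integration by parts applied to the nonnegative martingale $Z^{(1)}$ and the increasing process ${\rm Var}_\cdot B$, localized along $\sigma_n := \inf\{t \colon {\rm Var}_t B \geq n\}\wedge T$, converts the left-hand side into the $Q$-expectation:
\begin{equation*}
\E_P\!\int_{[0,T]} Z_s^{(1)} d{\rm Var}_s B \;=\; \E_P\bigl[Z_T^{(1)} {\rm Var}_T B\bigr] \;=\; \E_Q[{\rm Var}_T B].
\end{equation*}
Finally, since $Z_0 \in K^{*}$ with $Z_0^{(1)} = 1$, i.e.\ $Z_0 \in \Lambda$, the dual representation $\wp(x) = \sup_{y \in \Lambda} x \cdot y$ gives $Z_0 \cdot x \leq \wp(x)$, concluding the proof.

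The main technical obstacles I anticipate are (i) the careful book-keeping of the initial jump $B_0 \in -K$ in the integration by parts, (ii) the twofold localization needed to upgrade the two local martingales to true martingales under $P$, and (iii) ensuring that the dimension-dependent constant coming from the $\ell^{2}$ formulation of the cone inequality and the $\ell^{1}$ definition ${\rm Var}_T B = \sum_j {\rm Var}_T B^{j}$ is either absorbed or handled by choosing the Radon--Nikodym derivative $\dot B$ relative to the correct total variation measure so that the stated constant $1/\varepsilon$ is obtained.
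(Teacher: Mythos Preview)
Your argument is correct and follows the same core mechanism as the paper's proof: integration by parts for $Z\cdot \wh V$, the cone inequality from the $\varepsilon$-CPS condition, and a supermartingale (Fatou/localization) step to pass to expectations. The organizational difference is that the paper switches to $Q$ at the outset by setting $M_t := Z_t/Z_t^{(1)}$, which is a $Q$-martingale with $M^{(1)}\equiv 1$, and then invokes the packaged supermartingale Lemma~\ref{lemma:4.1} under $Q$; this yields $\E_Q[{\rm Var}_T B]$ directly on the left-hand side and makes the normalization $|M/S|\ge 1$ immediate. You instead stay under $P$ with $Z$, obtain a bound on $\E_P\!\int_{[0,T]} Z_s^{(1)}\,d{\rm Var}_s B$, and then need a second integration by parts to reach $\E_Q[{\rm Var}_T B]$. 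That extra step is valid, but note that what comes out cleanly is only the inequality $\E_Q[{\rm Var}_T B]=\E_P[Z_T^{(1)}{\rm Var}_T B]\le \E_P\!\int_{[0,T]} Z_s^{(1)}\,d{\rm Var}_s B$ (the local martingale $\int {\rm Var}_{s-}B\,dZ^{(1)}$ is bounded below by an integrable random variable once the right-hand side is finite, hence a supermartingale); equality requires a further uniform-integrability argument that you do not actually need. The paper's route simply avoids this detour.

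Your concern (iii) is well placed and, in fact, is shared by the paper: the definition of ${\rm Var}_T B=\sum_j {\rm Var}_T B^j$ forces $|\dot B|_{\ell^1}=1$ a.e., so in the $\ell^2$ norm used in the $\varepsilon$-interior definition one only has $|\dot B|\ge 1/\sqrt{d}$. The paper's assertion ``$|\dot B_t|=1$'' is thus to be read in $\ell^1$. This discrepancy costs at most a dimensional factor $\sqrt d$ in the constant, which is immaterial for the downstream use (tightness in Theorem~\ref{theo:apply}).
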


\begin{proof}
Without loss of generality, $Z^{(1)}_0 = 1$. Put $M_t := Z_t / Z^{(1)}_t$.
Due to Lemma \ref{lemm:6.1}, $\dot B_t \in -K$. By definition of $\varepsilon$-CPS,
\begin{equation*}
(\dot B_t / S_t) M_t  = \dot B_t (M_t / S_t) < -\varepsilon |\dot B_t| |M_t / S_t|.
\end{equation*}
Clearly, $|\dot B_t(\omega)|$ = 1 outside a set of measure $d {\rm Var} \, B_T(\omega) dP(\omega)$ zero. By changing the derivative on a set of $d {\rm Var} \, B_T(\omega) dP(\omega)$-measure zero, we arrive at $|\dot B| \equiv 1$. Also, note that $M^{(1)}_t \equiv 1$ and $S^{(1)}_t \equiv 1$ implying that $|M_t / S_t| \geq 1$. Finally, we obtain  $\dot B_t (M_t / S_t ) \leq -\varepsilon$.

Since $M$ is a $Q$-martingale, Lemma \ref{lemma:4.1} yields that $M \wh V$ is a $Q$-supermartingale.
Due to $\wh V_{0-} = x$ and $S_0 = \1$,
\begin{equation*}
\E_Q M \cdot \wh V_T  = 
x \E_Q M_0 + \E ((M/S) \dot B) \cdot {\rm Var} \, B_T \leq x \E_Q M_0 - \varepsilon \E_Q {\rm Var} \, B_T.
\end{equation*}
Finally,
\begin{equation*}
\varepsilon \E_Q  {\rm Var} \, B_T \leq x \E_Q M_0 - \E_Q M_T \wh V_T \leq x \E_Q M_0.
\end{equation*}
Since $M_0 \in \Lambda$ a.s., the assertion follows.
\end{proof}

\smallskip
\noindent
{\sl Completion of the proof.}
Due to Lemma \ref{lemma:4.2} and the Markov inequality, a family of measures
\begin{equation*}
\{  \cL_Q ( {\rm Var}_T \, B ): \; B \in \cA(x)  \}
\end{equation*}
is tight. Since $P \ll Q$, 
\begin{equation*}
\{  \cL_P ( {\rm Var}_T \, B ): \; B \in \cA(x)  \}
\end{equation*}
is tight as well. 
As $\{f : \; {\rm Var}_T \, f \leq c \}$ are compact sets in $D_{MZ}([0, T], \bbr^d)$, the assertion follows.

\subsection{Proof of Lemma \ref{lemm:ui}}

Put $p := q / (q - 1)$. Clearly, $1/p > \gamma$. It suffices to establish that, for a fixed $x \in {\rm int}\, K$,
\begin{equation*}
\sup_{B \in \cA(x)} \E \left(V^{x, B}_T \right)^{1/p} < \infty.
\end{equation*}
Put $Q := Z^{(1)}_T P$ and $M := Z / Z^{(1)}$. By Lemma \ref{lemma:4.1}, $\wh V^{x, B} Z$ is a $P$-supermartingale implying that $\wh V^{x, B} M$ is a $Q$-supermartingale. By definition, $V^{x, B}_T - \ell(V^{x, B}_T) e_1  \in K$, where $e_1$ is the first vector of the standard basis in $\bbr^d$. As $Z / S \in K^*$ and $M^{(1)} \equiv 1$, we have
\begin{equation*}
\ell(V^{x, B}_T) = (M_T / S_T) \ell(V^{x, B}_T) e_1 \leq (M_T / S_T) V^{x, B}_T = M_T \wh V^{x, B}_T.
\end{equation*}
Then by the H\"{o}lder inequality
\begin{multline*}
\E \left( \ell(V^{x, B}_T) \right)^{1/p} \leq \E \left(M_T \wh V^{x, B}_T \right)^{1/p} = \E_Q \left[ \left(M_T \wh V^{x, B}_T \right)^{1/p} \left(Z^{(1)}_T \right)^{-1} \right] \leq \\
\left(\E_Q \left[ M_T \wh V^{x, B}_T \right] \right)^{1/p} \left( \E_Q \left(Z^{(1)}_T \right)^{-q} \right)^{1/q} \leq x^{1/p}  \left( \E \left(Z^{(1)}_T \right)^{1-q} \right)^{1/q} < \infty.
\end{multline*}
That concludes the proof.

\backmatter









\begin{appendices}

\section{Appendix}\label{Section:appendix}

\subsection{$K$-increasing (decreasing) functions are of bounded variation}

A right-continuous function of bounded variation $f: [0, T] \to \bbr$ induces a signed measure $\mu$ on $[0, T]$ with $\mu([0, t]) = f(t)$. In particular, $f(0)$ is $\mu(\{0\})$. If the Hahn decomposition $f = f_1 - f_2$ with $f_2(0) = 0$, then ${\rm Var} \, f = f_1 + f_2$. If a function  $f: [0, T] \to \bbr^d$ with entries $f^i$ being right-continuous functions of bounded variation,  set ${\rm Var} \, f : = \sum_i {\rm Var} \, f^i$.

\begin{lemma}
\label{lemm:6.1}
Let $f: \, [0, T] \to \bbr^d$ be a right-continuous function and $K$ be a closed proper convex cone. Then the following are equivalent:

$(i)$ $f$ is $K$-increasing ($K$-decreasing);

$(ii)$ $f$ is of bounded variation with the Radon--Nikodym derivative $\dot f := d f / d {\rm Var} f$ evolving in $K$ ($-K$), where ${\rm Var} \, f(t)$ is the total variation of $f$ on $[0, t]$.
\end{lemma}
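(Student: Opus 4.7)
The plan is to prove the two implications (i)$\Leftrightarrow$(ii) in the $K$-increasing version, and then deduce the $K$-decreasing version by applying the result to $-f$ (which is $K$-increasing with $-\dot f \in K$ iff $\dot f \in -K$, while $\mathrm{Var}\,f = \mathrm{Var}(-f)$).

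For the direction (ii)$\Rightarrow$(i), I would note that since $f$ has bounded variation and is right-continuous, the vector-valued Lebesgue--Stieltjes measure $df$ is well-defined and equals $\dot f \, dV$, where $V(t):=\mathrm{Var}_t f$ is a nonnegative measure. For $0 \le s \le t \le T$ one has $f(t)-f(s) = \int_{(s,t]} \dot f(u)\, dV(u)$. For every $y \in K^*$, $y\cdot\dot f \ge 0$ a.e.\ with respect to $dV$, hence $y\cdot(f(t)-f(s))\ge 0$. By the bipolar theorem $K^{**}=K$, so $f(t)-f(s)\in K$, i.e.\ $f$ is $K$-increasing.

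For the direction (i)$\Rightarrow$(ii), since $K$ is a closed proper convex cone, $\mathrm{int}\,K^*\neq \varnothing$. Fix any $y_0 \in \mathrm{int}\,K^*$. A standard compactness argument on the intersection of $K$ with the unit sphere yields a constant $c>0$ with $y_0\cdot x \ge c|x|$ for every $x\in K$. The scalar function $g(t):= y_0\cdot f(t)$ is right-continuous and, by the $K$-increasing property, nondecreasing on $[0,T]$. For any partition $0=t_0<\dots<t_k=T$,
\begin{equation*}
\sum_{j}|f(t_j)-f(t_{j-1})| \le \frac{1}{c}\sum_{j} y_0\cdot(f(t_j)-f(t_{j-1})) = \frac{g(T)-g(0)}{c},
\end{equation*}
which bounds $\mathrm{Var}_T f$ uniformly, giving bounded variation.

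It remains to show $\dot f \in K$ almost everywhere with respect to $dV$. By the vector Radon--Nikodym theorem, $df = \dot f\, dV$, so for every $y\in K^*$ and $s<t$, $\int_{(s,t]} y\cdot\dot f\, dV = y\cdot (f(t)-f(s))\ge 0$ because $f(t)-f(s)\in K$. Thus the signed measure $y\cdot\dot f\, dV$ is nonnegative, implying $y\cdot\dot f \ge 0$ $dV$-a.e. Choosing a countable dense set $\{y_n\}\subset K^*$, there is a common full-measure set on which $y_n\cdot\dot f\ge 0$ for all $n$; by density and continuity of the inner product, $y\cdot\dot f\ge 0$ for every $y\in K^*$, so $\dot f\in K^{**}=K$ $dV$-a.e. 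The main obstacle is this last step: one must justify the exchange of a continuum condition ($y\cdot\dot f\ge 0$ for all $y\in K^*$) with a countable one and carefully apply the bipolar theorem, together with the standard but slightly delicate passage from the measure identity $\int_{(s,t]} y\cdot\dot f\, dV\ge 0$ for all intervals to pointwise positivity $dV$-a.e.
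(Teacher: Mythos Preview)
Your proof is correct. The implication (ii)$\Rightarrow$(i) and the reduction of the $K$-decreasing case to the $K$-increasing one match the paper. The arguments for (i)$\Rightarrow$(ii) differ in two respects.

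For bounded variation, the paper chooses a vector-space basis $a_1,\dots,a_d$ of $\mathbb{R}^d$ lying in $K^*$, observes that each scalar function $g_i(t)=a_i\cdot f(t)$ is nondecreasing, and recovers $f$ from the $g_i$ via the inverse Gram matrix, so that bounded variation of the monotone $g_i$ transfers to $f$. Your route---fix a single $y_0\in\mathrm{int}\,K^*$ and exploit the coercivity inequality $y_0\cdot x\ge c|x|$ on $K$---is more direct and gives a one-line quantitative bound on the variation.

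For $\dot f\in K$, the paper only verifies $a_i\cdot\dot f\ge 0$ $d\mathrm{Var}f$-a.e.\ for the basis vectors and then asserts $\dot f\in K$. Strictly speaking, positivity against a basis of $\mathbb{R}^d$ contained in $K^*$ does not by itself force membership in $K=K^{**}$ unless those vectors happen to generate $K^*$ as a cone. Your argument via a countable dense subset of $K^*$ together with the bipolar theorem handles this point cleanly, so the ``obstacle'' you flag at the end is in fact fully resolved, and your version is the more careful one here.
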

\begin{proof}
$(ii) \Rightarrow (i)$ is apparent. It remains to verify $(i) \Rightarrow (ii)$. Consider the case of $f$ being $K$-increasing. As $K$ is proper, $K^*$ contains a basis $a_1, \dots, a_d$ of $\bbr^d$. Denote scalar products $g_i(t) = a_i f(t)$. As $f(t) - f(s) \in K$, we have that $g_i(t) \geq g_i(s)$, i.e. $g_i$ are increasing.  The coefficients $c_i(t)$ of a linear combination
\begin{equation*}
f(t) := \sum_{i=1}^d a_i c_i(t) 
\end{equation*}
can be obtained as linear combinations of scalar products $g_i(t)$:
\begin{equation*}
c(t) = G^{-1} g(t),
\end{equation*}
where $G$ is the Gram matrix of $\{ a_i \}_{i=1}^d$. It implies that $c_i(t)$ are right-continuous of bounded variation and so is $f$. In this case, ${\rm Var} \, f$ is well-defined. Since 
\begin{equation*}
a_i(f(t) - f(s)) = \int_{]s, t]} a_i \dot f(u) d {\rm Var} f(u) \geq 0,
\end{equation*}
$\dot f \in K$ up to a set of $d {\rm Var} \, f$-measure zero. By redefining $\dot f$ on this set, we arrive at the required statement. 
\end{proof}

\subsection{An auxiliary estimate}

The statement below is similar to Lemma 3.6.2 of \citep{kabanovMarketsTransactionCosts2010}. Let $\cM$ be a family of right-continuous $\bbf$-martingales $Z$ with $Z_t \in \wh K_t$ a.s. for every $t \in [0, T]$ (and, as such, up to evanescence). Denote $\cX$ the set of all $d$-dimensional right-continuous processes of bounded variation $X$ with the Radon--Nikodym derivative (with respect to the total variation process ${\rm Var} \, X$) $\dot X \in - \wh K$ and $X + a {\bf 1} \in \wh K$  up to evanescence for some $a \in \bbr$. By convention, $X_{0-} = 0$.

\begin{lemma}
\label{lemma:4.1}
    Let $X \in \cX$ and $Z \in \cM$. Then the scalar product $ZM$ is a supermartingale, and 
    \beq
    \E (Z \dot X) \cdot {\rm Var}_T \, X \geq  \E Z_T X_T.
    \eeq
\end{lemma}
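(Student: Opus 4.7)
My plan is to apply the Ito product formula to the scalar process $Z \cdot X$, then separate the resulting drift from the local-martingale piece. Since $X$ has bounded variation and $X_{0-} = 0$, integration by parts gives
\[
Z_t \cdot X_t = \int_{[0,t]} Z_{s-} \cdot dX_s + \int_{(0,t]} X_{s-} \cdot dZ_s + [Z, X]_t,
\]
with $[Z, X]_t = \sum_{0 \le s \le t} \Delta Z_s \cdot \Delta X_s$ because $X$ has finite variation. Merging the predictable integral with the covariation sum yields $\int_{[0, t]} Z_s \cdot dX_s$, so the decomposition takes the clean form
\[
Z_t \cdot X_t = D_t + N_t, \quad D_t := \int_{[0, t]} Z_s \cdot \dot X_s \, d{\rm Var}_s X, \quad N_t := \int_{(0, t]} X_{s-} \cdot dZ_s.
\]

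Next I would exploit cone duality: $\dot X_s \in -\wh K_s$ combined with the intended dual-cone membership $Z_s \in \wh K^*_s$ (consistent with the $\varepsilon$-CPS application in Lemma~\ref{lemma:4.2}) yields $Z_s \cdot \dot X_s \le 0$ in the $d{\rm Var}\, X \otimes dP$-a.e.~sense. Hence $D$ is pathwise nonincreasing, thus an adapted supermartingale. The process $N$, being the stochastic integral of the left-continuous (hence locally bounded and predictable) integrand $X_{-}$ against the martingale $Z$, is a local martingale vanishing at $t = 0$.

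The main obstacle is upgrading $N$ from a local martingale to a genuine supermartingale, so that the decomposition $ZX = D + N$ delivers both the supermartingale property and the integral inequality. I plan to resolve this via the lower bound implied by $X + a {\bf 1} \in \wh K$: cone duality gives $Z \cdot (X + a {\bf 1}) \ge 0$, so $Z_t \cdot X_t \ge -a \sum_i Z^i_t$. Since $\bbr^d_+ \subset K$ forces $K^* \subset \bbr^d_+$, each $Z^i$ is a nonnegative integrable martingale; therefore $N_t = Z_t X_t - D_t \ge -a \sum_i Z^i_t - D_0$ is bounded below by an integrable random variable. A standard Fatou argument then promotes $N$ to a supermartingale, and consequently $ZX = D + N$ is a supermartingale.

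Finally, taking expectations closes the estimate: $\E N_T \le \E N_0 = 0$, hence
\[
\E Z_T \cdot X_T = \E D_T + \E N_T \le \E D_T = \E \int_{[0, T]} Z_s \cdot \dot X_s \, d{\rm Var}_s X,
\]
which is exactly the stated inequality.
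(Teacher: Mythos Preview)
Your argument is correct and follows essentially the same route as the paper: integration by parts to write $ZX$ as a local martingale plus the nonincreasing process $\int Z\,dX$, then use the cone bound $Z\cdot(X+a\mathbf 1)\ge 0$ to minorize the local-martingale part by the martingale $-a\sum_i Z^i$ and upgrade it to a supermartingale. Your phrasing ``bounded below by an integrable random variable'' should really read ``bounded below by a martingale'' (since $-a\sum_i Z^i_t$ depends on $t$), but the Fatou step you invoke is the right one, and your direct scalar treatment of the lower bound is in fact cleaner than the paper's componentwise version.
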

\begin{proof}
    By the product formula,
    \begin{equation*}
    Z  X = X_- \cdot Z + Z \cdot X_T.
    \end{equation*}
    The first term is a local martingale. As $Z_t \dot X_t \leq 0$, the latter term is a negative decreasing process.  For every entry, we have
    \begin{equation*}
    X^j_- \cdot Z^j = Z^jX^j - Z^j \dot X^j \cdot {\rm Var} \, X  \geq Z^j X^j \geq -aZ^j
    \end{equation*}
    for some $a \in \bbr_+$ by definition of $\cX$. Recall that a local martingale bounded from below by a martingale is a supermartingale. The terminal value of the decreasing process $Z \cdot X_T = Z_T X_T - X_- \cdot Z_T $. Since $X_- \cdot Z$ is a supermartingale, we obtain that $\E X_- \cdot Z_T \leq \E X_- \cdot Z_0 = 0$. Finally, $Z_T X_T \geq -a Z_T \1$ implying that $(Z \dot X) \cdot {\rm Var } \, X_T$ is bounded from below by an integrable random variable. Thus, $(Z \dot X) \cdot {\rm Var } \, X$ is a supermartingale and so is $ZX$. Immediately,
    \begin{equation*}
    \E (Z \dot X) \cdot {\rm Var}_T \, X = \E(Z_T X_T - X_- \cdot Z_T ) \geq \E Z_T X_T.
    \end{equation*}
    That concludes the proof.
\end{proof}

\section*{Declarations}

\subsection*{Ethical Approval}
Not applicable.

\subsection*{Competing interests}
The authors have no relevant financial or non-financial interests to disclose.

\subsection*{Authors' contributions}

All authors wrote and reviewed the manuscript.





\end{appendices}


\bibliography{sn-bibliography}

\end{document}